\theoremstyle{plain}
\newtheorem{theorem}{Theorem}[section]
\newtheorem{lemma}[theorem]{Lemma}
\crefname{fact}{Fact}{Facts}
\newtheorem{claim}[theorem]{Claim}
\crefname{claim}{Claim}{Claims}
\theoremstyle{definition}
\newtheorem{definition}[theorem]{Definition}
\newtheorem*{rep@theorem}{\rep@title}
\newcommand{\newreptheorem}[2]{%
  \newenvironment{rep#1}[1]{%
    \def\rep@title{#2 \ref{##1}}%
    \begin{rep@theorem}}%
    {\end{rep@theorem}}}
\numberwithin{equation}{section} % number equations within section
\newcommand{\MyParagraph}[1]{\medskip \noindent {\bf #1}}
\renewcommand{\paragraph}{\MyParagraph}
\DeclareMathOperator*{\argmax}{argmax}
\newcommand{\ignore}[1]{}
\newcommand{\qedsymb}{\hfill{\rule{2mm}{2mm}}}
\renewenvironment{proof}{\begin{trivlist} \item[\hspace{\labelsep}{\bf
\noindent Proof.\/}] }{\qedsymb\end{trivlist}}
\newenvironment{MyEnumerate}[1]{\begin{enumerate}\setlength{\itemsep}{0.1cm}
\setlength{\parskip}{-0.05cm} #1}{\end{enumerate}}
\newenvironment{proofsketch}{\begin{trivlist} \item[\hspace{\labelsep}{\bf
\noindent Proof Sketch.\/}] }{\qedsymb\end{trivlist}}
\numberwithin{equation}{section} % number equations within section
\newcommand{\aka} {also known as\ }
\renewcommand{\aka} {a.k.a.\ }
\newcommand{\wrt} {with respect to\ }
\newcommand{\resp}{resp.,\ }
\newcommand{\ie}  {i.e.,\ }
\newcommand{\abs}[1]{\left|#1\right|}
\newcommand\E{\mathop{\mathbb E}}
\newcommand{\eqdef} {\mathrel{\stackrel{\makebox[0pt]{\mbox{\normalfont\tiny
def}}}{=}}}
\newcommand{\U}{\mathbf U}
\newcommand{\N}{\mathbb N}
\newcommand{\R}{\mathbb R}
\newcommand{\Hc}{\mathcal H}
\newcommand{\B}{\{ 0,1 \}}
\def\epsilon{\varepsilon}
\def\eps{\varepsilon}
\newcommand{\D}{\mathcal D}
\newcommand{\Adv}{\mathcal A}
\newcommand{\ith}[1]{{#1}\textsuperscript{th}}
\newcommand{\negl}{{\mathsf{negl}}}
\newcommand{\poly}{{\mathsf{poly}}}
\newcommand{\polylog}{{\mathsf{polylog}}}
\newcommand{\SD}{\mathsf{SD}}
\def\({\left(}
\def\){\right)}
\newcommand{\GOOD}{\mathsf{Good}} 
\newcommand{\party}[1]{\mathsf{P}_{{#1}}}
\newcommand{\corr}{\textsf{corrupt}_k}
\newcommand{\row}{\textsf{Row}}
\newcommand{\lo}{{\mathsf{R}}}
\newcommand{\sh}{{\mathsf{r}}}
\newcommand{\out}[3]{{\mathsf{out}(#2_{#1} \mid #3)}}
\newcommand{\success}{{\mathsf{succ}}}
\newcommand{\successset}{{\mathsf{succ_{s}}}}
\newcommand{\valset}[2]{{\successset(#2_{#1})}}
\newcommand{\valb}[3]{{\success_{#1}(#3_{#2})}}
\newcommand{\supp}{\mathsf{supp}}
\newcommand{\ent}{\mathsf{entropy}}
\newcommand{\real}{\Adv}
\newcommand{\ideal}{\mathsf{ideal}}
\newcommand{\trans}{{\mathsf{Trans}}}
\newcommand{\Exp}{\mathsf{Exp}}
\newcommand{\MAP}{\mathsf{MAP}}
\newcommand{\event}{\mathsf{E}}
\theoremstyle{remark}
\def\({\left(}
\def\){\right)}
\title{Compressing Communication in Selection Protocols\footnote{A preliminary
    version of this work appeared in the 29th International Symposium on
    DIStributed Computing (DISC 2015), pp. 467--479.}}
\author{Yael Tauman Kalai \thanks{Microsoft Research. Email: {\tt
      yael@microsoft.com}.}
  \and
  Ilan Komargodski \thanks{Weizmann Institute of Science, Israel. Email: {\tt
      ilan.komargodski@weizmann.ac.il}. Part of this work done while an intern
    at MSR New England. Supported in part by a grant from the I-CORE Program of
    the Planning and Budgeting Committee, the Israel Science Foundation, BSF and
    the Israeli Ministry of Science and Technology.  } }
\date{}
\begin{document}

\maketitle

\begin{abstract}
  We show how to compress communication in selection protocols, where the goal is to agree on a sequence of random bits using only a broadcast channel. More specifically, we present a generic method for converting any selection protocol, into another selection protocol where each message is {\em short} while preserving the same number of rounds, the same output distribution, and the same resilience to error. Assuming that the output of the protocol lies in some universe of size~$M$, in our resulting protocol each message consists of only $\polylog(M,n,d)$ many bits, where $n$ is the number of parties and $d$ is the number of rounds.  Our transformation works in the presence of either static or adaptive Byzantine faults.

  As a corollary, we conclude that for any $\poly(n)$-round collective coin-flipping protocol, leader election protocol, or general selection protocols, messages of length $\polylog(n)$ suffice (in the presence of either static or adaptive Byzantine faults).

  \bigskip \noindent {\small {\bf Keywords:} Communication complexity,
    compression, coin-flipping.}
\end{abstract}

%\tableofcontents

\thispagestyle{empty}
\newpage \setcounter{page}{1}

\section{Introduction}
The resource of communication is central in several fields of computer
science. We focus on minimizing this resource for selection protocols. A
selection protocol is a protocol over $n$ parties, each having a private source
of randomness, in which the goal of the parties is to agree on a sequence of
common random bits. We focus on the full information model \cite{BenOrL85},
where the parties communicate via a single broadcast channel. There is a global
counter which synchronizes parties in between rounds but they communicate
asynchronously withing rounds.  A selection protocol is a generalization of
several very well studied problems, including collective coin-flipping and
leader election.  

The challenge in designing such protocols is that a subset of the parties may be
corrupted and the rest of the parties should nevertheless agree on a random
output. We model faulty parties by a computationally unbounded adversary who
controls a subset of parties and whose aim is to bias the output of the
protocol. We assume that once a party is corrupted, the adversary gains complete
control over the party and can send any messages on its behalf, and the messages
can depend on the entire transcript so far.  In addition, we allow our adversary
to be \emph{rushing}, \ie it can schedule the delivery of the messages within
each round.  We consider two classes of adversaries: \emph{static} and
\emph{adaptive}. A static adversary is an adversary that chooses which parties
to corrupt ahead of time, before the protocol begins. An adaptive adversary, on
the other hand, is allowed to choose which parties to corrupt {\em adaptively}
in the course of the protocol as a function of the messages seen so far. We say
that a protocol is (statically/adaptively) \emph{secure} or \emph{resilient} if it results with a common
random output in the presence of a (statical/adaptive) adversary that corrupts parties.

We study the following question.
\begin{center}
  \textit{Is there a generic way to compress communication in selection
    protocols, without negatively affecting the round complexity,
    fault-tolerance and other resources?}
\end{center}
We give a positive answer to this question. Namely, we show how to compress
communication in selection protocols without incurring \emph{any} cost to the
round complexity or the resilience to errors. More details follow.

\paragraph{A concrete motivation: adaptively-secure coin-flipping.}
An important distributed task that was extensively studied in the full
information model, is that of \emph{collective coin-flipping}. In this problem,
a set of $n$ parties use private randomness and are required to generate a
common random bit. The goal of the parties is to jointly output a somewhat
uniform bit even in the case that some of the parties are faulty and controlled
by a static (resp.~adaptive) adversary whose goal is to bias the output of the protocol
in some direction.

This problem was first formulated and studied by Ben-Or and Linial
\cite{BenOrL85}. In the case of static adversaries, collective coin-flipping is
well studied and almost matching upper and lower bounds are known
\cite{Feige99,RussellSZ02}, whereas the case of adaptive adversaries has
received much less attention. Ben-Or and Linial \cite{BenOrL85} showed that the
majority protocol (in which each party sends a uniformly random bit and the
output of the protocol is the majority of the bits sent) is resilient to
$\Theta(\sqrt{n})$ adaptive corruptions. Furthermore, they conjectured that this
protocol is optimal, that is, they conjectured that any coin-flipping protocol
is resilient to at most $O(\sqrt{n})$ adaptive corruptions. Shortly afterwards,
Lichtenstein, Linial and Saks \cite{LichtensteinLS89} proved the conjecture for
protocols in which each party is allowed to send only \emph{one} bit. Very
recently, Goldwasser, Kalai and Park~\cite{GoldwasserKP15} proved a different
special-case of the aforementioned conjecture: any \emph{symmetric} (many-bit)
one-round collective coin-flipping protocol\footnote{A symmetric protocol $\Pi$
  is one that is oblivious to the order of its inputs: namely, for any
  permutation $\pi\colon[n]\to [n]$ of the parties, it holds that
  $\Pi(r_1,\dots,r_n) = \Pi(r_{\pi(1)},\dots, r_{\pi(n)} )$.} is resilient to at
most $\widetilde{O}(\sqrt{n})$ adaptive corruptions.  Despite all this effort,
proving a general lower bound, or constructing a collective coin-flipping
protocol that is resilient to at least $\omega(\sqrt{n})$ adaptive corruptions,
remains an intriguing open problem.

The result of \cite{LichtensteinLS89} suggests that when seeking for a
collective coin-flipping protocol that is resilient to at least
$\omega(\sqrt{n})$ adaptive corruptions, to focus on protocols that consist of
many communication rounds, or protocols in which parties send long messages.
Our main result (\Cref{lemma:many2few} below) is that long messages are not
needed in adaptively secure coin-flipping protocols with $\poly(n)$ rounds, and
messages of length $\polylog(n)$ suffice.\footnote{Note that if one could show
  that these $\polylog(n)$ bits can be sent bit by bit sequentially, then using
  the lower bound of \cite{LichtensteinLS89}, we could obtain that any
  collective coin flipping protocol in which each player sends $O(1)$ messages
  is resilient to at most $\sqrt{n\cdot \polylog(n)}$ adaptive
  corruptions. However, in the adaptive setting it is not clear that security is
  preserved if messages are sent bit by bit.} This is true more generally for
leader election protocols, and for selection protocols where the output comes
from a universe of size at most quasi-polynomial in~$n$.

\subsection{Our Results}
Our main result is that ``long'' messages are not needed for selection
protocols.  More specifically, we show how to convert any selection protocol,
whose output comes from a universe of size~$M$, into a selection protocol
with the same communication pattern\footnote{Here, we mean that a party sends a
  message at round $i$ of the new protocol only if it sends a message at round $i$ of
  the original protocol.}, the same output distribution, the same
security guarantees, and where parties send messages of length $\ell =
\polylog(M,n,d)$.  Note that for many well studied distributed tasks, such as
coin-flipping, leader election, and more, the output is from a universe of size
at most $\poly(n)$, in which case our result says that if we consider
$\poly(n)$-round protocols, then messages of length $\polylog(n)$ suffice.

\paragraph{Our results in more detail.}
Formally, we say that a selection protocol $\Pi$ is
\emph{$(t,\delta,s)$-statically (\resp adaptively) secure} if for any adversary
$\Adv$ that \emph{statically (\resp adaptively)} corrupts at most $t=t(n)$
parties, and any subset $S$ of the output universe such that $\abs{S}= s$, it
holds that
\begin{align*}
  \left|\Pr_{}\left[ \text{Output of $\Adv(\Pi)$} \in S \right] - \Pr_{}\left[
      \text{Output of $\Pi$} \in S \right]\right| \leq \delta,
\end{align*}
where ``Output of $\Adv(\Pi)$'' means the output of the protocol when executed
in the presence of the adversary $\Adv$, ``Output of $\Pi$'' means the output of
the protocol when executed honestly, and the probabilities are taken over the
internal randomness of the parties. In addition, we say that a protocol $\Pi$
\emph{simulates} a protocol $\Pi'$ if the outcomes of the protocols are
statistically close (when executed honestly) and their communication patterns
are the same.

Our main result is a generic communication compression theorem which, roughly
speaking, states that $(t,\delta,s)$-statically (\resp adaptively) secure
selection protocols \emph{do not} need ``long'' messages. Namely, we show that
any secure selection protocol which sends arbitrarily long messages can be
simulated by a protocol which is almost as secure and sends short messages. The
loss in security is a negligible (denoted by $\negl$), namely, asymptotically
smaller than any inverse polynomial function.

\begin{theorem}[Main theorem --- informal]\label{lemma:many2few}
  Any $(t,\delta,s)$-statically (\resp adaptively) secure selection protocol
  that outputs $m$ bits (or more generally, has an output universe of
  size~$2^m$), can be simulated by a $(t,\delta',s)$-statically (\resp
  adaptively) secure selection protocol, where $\delta' = \delta + \negl(n)$ and
  parties send messages of length $\ell = m\cdot \polylog(n,d)$.
\end{theorem}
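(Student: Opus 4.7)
The plan is to construct the compressed protocol $\Pi'$ by replacing each long random message in $\Pi$ with a short random pointer into a table of candidate messages. Concretely, for each party $i$, round $r$, and history $h$ of messages through round $r-1$, let $D_{i,r,h}$ denote the distribution of party $i$'s round-$r$ message in $\Pi$ conditioned on $h$. I would associate with each triple $(i,r,h)$ a table $T_{i,r,h}$ of $2^\ell$ independent samples from $D_{i,r,h}$, where $\ell = m \cdot \polylog(n,d)$. In $\Pi'$, an honest party $i$ in round $r$ sends a uniform $\ell$-bit string $Y_{i,r}$; this string is interpreted as an index into $T_{i,r,h}$, yielding an ``effective'' $\Pi$-message, and the output of $\Pi'$ is defined to be whatever $\Pi$ would produce on these effective messages. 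Since the honest message of party $i$ in $\Pi'$ is a uniformly random element of a size-$2^\ell$ sample from $D_{i,r,h}$, the round complexity and communication pattern of $\Pi$ are preserved while each message is shortened to $\ell$ bits.

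First I would establish honest correctness by sampling concentration. For any fixed history $h$ and any fixed subset $S$ of the output universe with $|S|=s$, the probability $\Pr[\text{output of }\Pi' \in S]$ averaged over the tables equals $\Pr[\text{output of }\Pi \in S]$; and by a Chernoff bound, for a single fixed history the deviation is $\negl(n)$ once $\ell$ is large enough. A union bound over the at most $\poly(n,d)$ histories/message slots visited in a single execution yields total deviation $\negl(n)$. Next, for security, given any adversary $\Adv'$ against $\Pi'$, I would construct an adversary $\Adv$ for $\Pi$ that runs $\Pi'(\Adv')$ internally, draws table entries on the fly from the correct distributions $D_{i,r,h}$ using its own randomness, and mirrors $\Adv'$'s corruption choices. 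The key point is that every message $\Adv'$ can inject through a corrupted party in $\Pi'$ corresponds to an element of the relevant table and is therefore a legal message in $\Pi$. Thus $\Adv$ in $\Pi$ achieves at least the bias $\Adv'$ achieves in $\Pi'$, and combining with the $(t,\delta,s)$-security of $\Pi$ yields $\delta' \leq \delta + \negl(n)$.

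The main obstacle will be the adaptive case. There the adversary adaptively observes table entries as they are revealed through the honest broadcasts and chooses whom to corrupt based on them, so the tables cannot be regarded as pre-sampled objects that are independent of $\Adv'$. I would handle this via lazy sampling inside the reduction: the adversary $\Adv$ in $\Pi$ draws a new entry of $T_{i,r,h}$ only at the moment an honest or corrupted party uses its index, and the distribution of each freshly drawn entry exactly matches the distribution of the corresponding honest message in $\Pi$ at that point. The concentration step must then be reproved conditionally along each transcript prefix, with a round-by-round hybrid argument that keeps the ``unused'' slots of a table uniform conditioned on $\Adv'$'s view. Bounding the effective number of histories the adversary can explore --- so that the union bound for sampling concentration does not blow up --- is the delicate accounting I expect to spend the most effort on, and it is what pins the exponent of $\polylog(n,d)$ in the final message length.
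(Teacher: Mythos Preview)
Your security reduction has a genuine gap. You propose to take an adversary $\Adv'$ against $\Pi'$ and build $\Adv$ against $\Pi$ by \emph{lazily sampling} the tables on the fly. But the tables are part of the description of $\Pi'$: in the full-information model the adversary knows the entire protocol, so $\Adv'$'s strategy may depend on \emph{all} table entries, including rows that will only be used in later rounds. Your $\Adv$, however, only fixes a table entry at the moment it is touched; until then the entry is undefined from $\Adv$'s point of view, so $\Adv$ cannot evaluate $\Adv'$ on the current view. The phrase ``observes table entries as they are revealed through the honest broadcasts'' is exactly where the model is misread: nothing is hidden from $\Adv'$, so there is no principled way to defer those samples. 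This is not a union-bound accounting issue; it is a dependency issue that no amount of polylog slack will fix.

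The paper resolves this by reversing the quantifiers. Rather than fixing one table and one $\Adv'$, it supposes for contradiction that for at least half of all matrices $H$ there is a successful adversary $\Adv^H$ against $\Pi_H$, and then builds a single adversary $\Adv$ against $\Pi$ that, at every step, picks a \emph{random} $H$ still consistent with the long transcript so far and simulates $\Adv^H$ for one move. The heart of the argument is a counting lemma showing that the set of consistent $H$'s shrinks slowly enough (each honest or corrupt step removes roughly a $2^{NL}\cdot\poly$ factor) that it never empties; an entropy argument then shows that, over the surviving $H$'s, the ``next honest message'' is still almost uniform in $\{0,1\}^L$, so the hybrid between $\Pi$ and the ideal $\Pi_H$-execution closes. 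Your lazy-sampling intuition is morally the ``one $H$'' special case of this, but the proof genuinely needs the whole family $\{\Adv^H\}_H$ at once. A secondary point: indexing tables by the full history $h$ gives you exponentially many tables; the paper avoids this by first moving to a public-coin protocol, after which one table per $(\text{round},\text{party})$ pair suffices.
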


We note that the transformation in Theorem~\ref{lemma:many2few} results in a
\emph{non-uniform} protocol, even if the protocol we started with is uniform.
We elaborate on this in Section~\ref{sec:techniques}.

% \subsection{Related Work}\label{sec:related}
% The resource of communication is central in several fields of computer
% science. The field of communication complexity is devoted to the study of which
% problems can be solved with as little communication as possible. We refer to the
% book of Kushilevitz and Nisan \cite{KN97} for an introduction to the field. In
% cryptography, minimizing communication has been the focus of several works in
% several contexts, including private information retrieval~\cite{KushilevitzO97},
% random access memory machines~\cite{NaorN01}, and more.

\subsection{Overview of Our Techniques}\label{sec:techniques}

In this section we provide a high-level overview of our main ideas and
techniques. First, we observe that in our model of communication (the full
information model where all communication is done via a broadcast channel) one
can assume, without loss of generality, that any selection protocol (in which
parties do not have private inputs except a source of randomness), can be
transformed into a public-coin protocol, in which honest parties' messages
consist only of random bits. This fact is a folklore, and for the sake of
completeness we include a proof sketch of it in \Cref{sec:public_coin}.

Our main result is a generic transformation that converts any public-coin
protocol, in which parties send arbitrarily long messages, into a protocol in
which parties send messages of length $m\cdot \polylog(n\cdot d)$, where $m$ is the
number of bits the protocol outputs, $n$ is the number of parties participating
in the protocol, and $d$ is the number of communication rounds.  The resulting
protocol simulates the original protocol, has the same round complexity, and
satisfies the same security guarantees. Next, we elaborate on how this
transformation works.

Suppose for simplicity that in our underlying protocol each message sent is of
length~$L=L(n)$ (and thus the messages come from a universe of size~$2^L$), and
think of~$L$ as being very large.  We convert any such protocol into a new
protocol where each message consists of only~$\ell$ bits, where think of~$\ell$
as being significantly smaller than~$L$.  This is done by a priori choosing
$2^\ell$ messages within the $2^L$-size universe, and restricting the parties to
send messages from this restricted universe.  Thus, now each message is of
length $\ell$, which is supposedly significantly smaller than~$L$.  We note that
a similar approach was taken in \cite{Newman91} in the context of transforming
public randomness into private randomness in communication complexity, in
\cite{GoldreichS10} to reduce the number of random bits needed for property
testers, and most recently in \cite{GoldwasserKP15} to prove a lower bound for
coin-flipping protocols in the setting of strong adaptive adversaries.

A priori, it may seem that such an approach is doomed to fail, since by
restricting the honest parties to send messages from a small universe within the
large~$2^L$-size universe, we give the adversary a significant amount of
information about future messages (especially in the multi-round case).
Intuitively, the reason security is not compromised is that there are {\em many}
possible restrictions, and it suffices to prove that a few (or only one) of
these restrictions is secure. In other words, very loosely speaking, since we
believe that most of the bits sent by honest parties are not ``sensitive'', we
believe that it is safe to post some information about each message ahead of
time.

For the sake of simplicity, in this overview we focus on static adversaries, and
to simplify matters even further, we assume the adversary always corrupts the
first~$t$ parties. This simplified setting already captures the high-level
intuition behind our security proof in \Cref{sec:proof}.

Let us first consider one-round protocols. Note that for one-round protocols
restricting the message space of honest parties does not affect security at all
since we consider rushing adversaries, who may choose which messages to send
based on the content of the messages sent by all honest parties in that
round. Thus, reducing the length of messages is trivial in this case,
assuming the set of parties that the adversary corrupts is predetermined.  We
mention that even in this extremely simplified setting, we need $\ell$ to be
linear in~$m$ for correctness (``simulation''), \ie in order to ensure that the
output is distributed correctly.

Next, consider a multi-round protocol~$\Pi$.  We denote by~$H$ the restricted
message space, \ie $H$ is a subset of the message universe of size~$2^\ell$, and
denote by $\Pi_H$ the protocol $\Pi$, where the messages are restricted to the
set~$H$. Suppose that for any set~$H$ there exists an adversary $\Adv^H$ that
biases the outcome of $\Pi_H$, say towards~$0$.\footnote{Of course, it may be
  that for different sets~$H$, the adversary~$\Adv^H$ biases the outcome to a
  different value. For simplicity we assume here that all the adversaries bias
  the outcome towards a fixed message, which we denote by $0$.} We show that in
this case there exists an adversary~$\Adv$ in the underlying protocol that
biases the outcome towards~$0$.  Loosely speaking, at each step the
adversary~$\Adv$ will simulate one of the adversaries $\Adv^H$. More
specifically, at any point in the underlying protocol, the adversary will
randomly choose a set~$H$ such that the transcript so far is consistent (\ie
same transcript) with a run of protocol $\Pi_H$ with the adversary $\Adv^H$, and
will simulate the adversary $\Adv^H$.  The main difficulty is to show that with
high probability there exists such~$H$ (\ie the remaining set of consistent
$H$'s is non-empty).  This follows from a counting argument and basic
probability analysis.

In our actual construction, we have a distinct set $H$ of size~$2^\ell$
corresponding to {\em each} message of the protocol.  Thus, if the underlying
protocol~$\Pi$ has~$d$ rounds, and all the parties send a message in each round,
then the resulting (short-message) protocol is associated with $d\cdot n$ sets
$H_1,\ldots,H_{d\cdot n}$ each of size~$2^\ell$, where the message of the
$\ith{j}$ party in the $\ith{i}$ round is restricted to be in the set $H_{i,j}$.
We denote all these sets by a matrix $H\in\left(\{0,1\}^L\right)^{{d\cdot
    n}\times 2^\ell}$, where the row $(i,j)$ of $H$ corresponds to the set of
messages that the $\ith{j}$ party can send during the $\ith{i}$ round.

Note that there are $2^{L\cdot 2^\ell\cdot d\cdot n}$ such matrices.  Each time
an honest party sends a uniformly random message in~$\Pi$ it reduces the set of
consistent matrices by approximately a~$2^L$-factor (with high probability).
Any time the adversary $\Adv$ sends a message, it also reduces the set of
consistent matrices~$H$, since his message is consistent only with some of the
adversaries $\Adv^H$, but again a probabilistic argument can be used to claim
that it does not reduce the set of matrices by too much, and hence, with high
probability there always exist matrices~$H$ that are consistent with the
transcript so far.

We briefly mention that the analysis in the case of adaptive corruptions follows
the same outline presented above.  One complication is that the mere decision of
whether to corrupt or not reduces the set of consistent matrices~$H$. Nevertheless, we
argue that many consistent matrices remain.

We emphasize that the above is an over-simplification of our ideas, and the actual proof is more complex.
We refer to \Cref{sec:proof} for more details.

\section{Preliminaries}\label{sec:prelim}
In this section we present the notation and basic definitions that are used in
this work. For an integer $n \in \mathbb{N}$ we denote by $[n]$ the set
$\{1,\ldots, n\}$.  For a distribution $X$ we denote by $x \leftarrow X$ the
process of sampling a value $x$ from the distribution $X$. Similarly, for a set
$X$ we denote by $x \leftarrow X$ the process of sampling a
value $x$ from the uniform distribution over $X$.  Unless explicitly
stated, we assume that the underlying probability distribution in our equations
is the uniform distribution over the appropriate set. We let $\U_L$ denote the
uniform distribution over $\B^L$. We use $\log x$ to denote a logarithm in
base $2$.

A function
$\negl\colon\N\to\R$ is said to be \emph{negligible} if for every constant $c > 0$ there exists
an integer $N_c$ such that $\negl(n) < n^{-c}$ for all $n > N_c$.

The {\em statistical distance} between two random variables $X$ and $Y$ over a
finite domain $\Omega$ is defined as
\begin{align}\label{eqn:SD}
  \SD(X,Y) \eqdef \frac{1}{2} \sum_{\omega \in \Omega} \left| \Pr[X=\omega]
    - \Pr[Y=\omega] \right|.
\end{align}

\subsection*{The Model}
\paragraph{The communication model and distributed tasks.}
We consider the synchronous model where a set of $n$ parties
$\party{1},\dots,\party{n}$ run protocols. Each protocol consists of
\emph{rounds} in which parties send messages. We assume the existence of a
global counter which synchronizes parties in between rounds (but they are
asynchronous within a round). The parties communicate via a broadcast channel.

The focus of this work is on selection protocols where parties do not have any
private inputs and their goal is to agree on a sequence of random bits.
Examples of such tasks are coin-flipping protocols, leader election protocols,
etc.

Throughout this paper, we restrict ourselves to public-coin protocols.
\begin{definition}[Public-coin protocols]\label{def:public-coin}
  A protocol is \emph{public-coin} if all honest parties' messages consist only
  of uniform random bits.
\end{definition}
In \Cref{sec:public_coin} we argue that the restriction to public-coin protocols
is without loss of generality since in the full information model any selection
protocol can be converted into a public-coin one, without increasing the round
complexity and without degrading security (though this transformation may
significantly increase the communication complexity).

\paragraph{The adversarial model.}
We consider the {\em full information model} where it is assumed the adversary
is all powerful, and may see the entire transcript of the protocol.  The most
common adversarial model considered in the literature is the Byzantine model,
where a bound~$t=t(n)\leq n$ is specified, and the adversary is allowed to
corrupt up to~$t$ parties. The adversary can see the entire transcript, has full
control over all the corrupted parties, and can broadcast any messages on their
behalf. Moreover, the adversary has control over the order of the messages sent
within each round of the protocol.\footnote{Such an adversary is often referred
  to as ``rushing''.}  We focus on the Byzantine model throughout this work.

Within this model, two types of adversaries were considered in the literature:  {\em static} adversaries, who need to specify the parties they corrupt {\em before} the protocol begins, and {\em adaptive} adversaries, who can corrupt the parties {\em adaptively} based on the transcript so far.  Our results hold for both types of adversaries.  Throughout this work, we focus on the adaptive setting, since the proof is more complicated in this setting.  In Subsection~\ref{sec:static} we mention how to modify (and simplify) the proof for the static setting.

\paragraph{Correctness and security.}
For any protocol $\Pi$ and any adversary $\Adv$, we denote by
\begin{align*}
  \out{\Pi}{\Adv}{\sh_1,\dots,\sh_{n}}
\end{align*}
the output of the protocol $\Pi$ when executed with the adversary $\Adv$, and where each honest party~$\party{i}$ uses randomness~$\sh_i$.

Let $\Pi$ be a protocol whose output is a string in $\{0,1\}^m$ for some
$m\in\N$. Loosely speaking, we say that an adversary is ``successful'' if he manages to bias the output of the protocol to his advantage.
More specifically,
we say that an adversary is ``successful'' if he chooses a predetermined subset $M\subseteq\{0,1\}^m$ of some size~$s$, and succeeds in biasing the outcome towards the set~$M$.  To this end, for any set size~$s$, we define
\begin{align*}
  \valset{\Pi}{\Adv} &\eqdef \max_{M\subseteq\B^m\mbox{ s.t. }|M|=s}\valb{M}{\Pi}{\Adv}\\
  &\eqdef \max_{M\subseteq\B^m\mbox{ s.t. }|M|=s}
 \left( \Pr_{\sh_1,\dots,\sh_{n}}[\out{\Pi}{\Adv}{\sh_1,\dots,\sh_{n}} \in M]-\Pr_{\sh_1,\dots,\sh_{n}}[{\sf out}_\Pi({\sh_1,\dots,\sh_{n}}) \in M]\right),
\end{align*}
where ${\sf out}_\Pi({\sh_1,\dots,\sh_{n}})$ denotes the outcome of the protocol~$\Pi$ if all the parties are honest, and use randomness ${\sh_1,\dots,\sh_{n}}$.

Intuitively, the reason we parameterize over the set size~$s$ is that we may hope for different values of $\valb{M}{\Pi}{\Adv}$ for sets $M$ of different sizes, since for a large set $M$ it is often the case that
$\Pr_{\sh_1,\dots,\sh_{n}}[{\sf out}_\Pi({\sh_1,\dots,\sh_{n}}) \in M]$ is large, and hence $\valb{M}{\Pi}{\Adv}$ is inevitably small, whereas for small sets $M$ the value $\valb{M}{\Pi}{\Adv}$ may be large.

For example, for coin-flipping protocols (where $m=1$ and the outcome is a
uniformly random bit in the case that all parties are honest), often an
adversary is considered successful if it biases the outcome to his preferred bit
with probability close to~$1$, and hence an adversary is considered successful
if $\valb{M}{\Pi}{\Adv}\geq\frac{1}{2}-o(1)$ for either $M=\{0\}$ or
$M=\{1\}$, whereas for general selection protocols (where $m$ is a parameter)
one often considers subsets $M\subseteq\{0,1\}^m$ of size $\gamma\cdot2^m$ for
some constant $\gamma>0$, and an adversary is considered successful if there
exists a constant $\delta>0$ such that $\valb{M}{\Pi}{\Adv}\geq\delta$.

\begin{definition}[Security]\label{def:sec}
  Fix any constant $\delta>0$, any $t=t(n)\leq n$, and any $n$-party protocol
  $\Pi$ whose output is an element in $\B^m$.  Fix any $s=s(m)$.  We say that
  $\Pi$ is \emph{$(t,\delta,s)$-adaptively secure} if
  for any adversary $\Adv$ that adaptively
  corrupts up to $t=t(n)$ parties, it holds that
  \begin{align*}
    \valset{\Pi}{\Adv} \leq\delta.
  \end{align*}
\end{definition}
We note that this definition generalizes the standard security definition for coin-flipping protocols and selection protocols.
We emphasize that our results are quite robust to the specific security definition that we consider, and we could have used alternative definitions as well. Intuitively, the reason is that we  show how to transform any $d$-round protocol~$\Pi$ into another $d$-round protocol with short messages, that simulates~$\Pi$ (see Definition~\ref{def:sim} below), where this transformation is {\em independent} of the security definition.  Then, in order to prove that the resulting protocol is as secure as the original protocol~$\Pi$, we show that if there exists an adversary for the short protocol that manages to break security according to some definition, then there exists an adversary for~$\Pi$ that ``simulates'' the adversary of the short protocol and breaches security in the same way.
(See Section~\ref{sec:techniques} for more details, and Section~\ref{sec:proof} for the formal argument).

Finally, we mention that an analogous definition to Definition~\ref{def:sec} can be given for static adversaries.  Our results hold for the static definition as well.

\begin{definition}[Simulation]\label{def:sim}
Let $\Pi$ be an $n$-party protocol with outputs in $\{0,1\}^m$.  We say that an $n$-party protocol~$\Pi'$ simulates~$\Pi$ if
$$
\SD\left({\sf out}_\Pi,{\sf out}_{\Pi'}\right)=\negl(n),
$$
where ${\sf out}_\Pi$ is a random variable that corresponds to the output of
protocol $\Pi$ assuming all parties are honest, and ${\sf out}_{\Pi'}$ is a random variable that corresponds to the output of protocol $\Pi'$ assuming all parties are honest.
\end{definition}

\subsection*{Probabilistic Tools}
In the analysis we will use the following simple claims.

\begin{claim}\label{claim:prob}
  Let $k,M\in\mathbb{N}$ be two integers. Let $U\subseteq\{0,1\}^k$ and $f\colon
  U\rightarrow
  [M]$.  For every $i\in[M]$, denote
  by $$\alpha_i=\Pr_{u\leftarrow U}\left[f(u)=i\right].$$ Then,
  \begin{align*}
    \E_{u\leftarrow U}\left[\alpha_{f(u)}\right]\geq\frac{1}{M},
  \end{align*}
  and for any $\epsilon>0$,
  \begin{align*}
    \Pr_{u\leftarrow U}\left[\alpha_{f(u)}\geq \frac{\epsilon}{M}\right]\geq
    1-\epsilon.
  \end{align*}
\end{claim}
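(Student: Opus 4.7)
The plan is to prove both parts by unwinding the expectation $\E_{u\leftarrow U}[\alpha_{f(u)}]$ and then using very elementary bounds. The key identity is
\[
\E_{u\leftarrow U}\bigl[\alpha_{f(u)}\bigr]\;=\;\sum_{i\in[M]}\Pr_{u\leftarrow U}[f(u)=i]\cdot\alpha_i\;=\;\sum_{i\in[M]}\alpha_i^{2},
\]
which reduces everything to manipulations on the probability vector $(\alpha_1,\ldots,\alpha_M)$ satisfying $\sum_i\alpha_i=1$ and $\alpha_i\geq 0$.

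For the first inequality, I would apply the Cauchy--Schwarz inequality (equivalently, the power-mean inequality) to $(\alpha_1,\ldots,\alpha_M)$ and the all-ones vector, yielding $M\cdot\sum_i\alpha_i^{2}\geq\bigl(\sum_i\alpha_i\bigr)^{2}=1$. Combined with the identity above, this immediately gives $\E_{u\leftarrow U}[\alpha_{f(u)}]\geq 1/M$.

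For the second inequality, I would bound the probability of the complementary (``bad'') event directly. Let $B\eqdef\{i\in[M]\colon\alpha_i<\epsilon/M\}$. Then by the law of total probability,
\[
\Pr_{u\leftarrow U}\!\left[\alpha_{f(u)}<\frac{\epsilon}{M}\right]
=\sum_{i\in B}\Pr_{u\leftarrow U}[f(u)=i]
=\sum_{i\in B}\alpha_i
<|B|\cdot\frac{\epsilon}{M}\leq\epsilon,
\]
where the last step uses $|B|\leq M$. Taking complements yields the claimed bound.

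I do not expect any obstacle here: both parts are standard facts about second moments of probability distributions on a finite set, and the only ``trick'' is recognizing that the expectation in the statement equals the collision probability $\sum_i\alpha_i^{2}$. The only thing worth being careful about is distinguishing the ``outer'' randomness $u\leftarrow U$ that selects the index $f(u)$ from the implicit counting involved in the sum $\sum_i\alpha_i=1$; once this is laid out as above, both inequalities follow in a line.
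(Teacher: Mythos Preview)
Your proposal is correct and follows essentially the same argument as the paper: both derive $\E_{u\leftarrow U}[\alpha_{f(u)}]=\sum_i\alpha_i^2$, apply Cauchy--Schwarz against the constant vector to get the first bound, and for the second bound sum $\alpha_i$ over the ``bad'' set $B=\{i:\alpha_i<\epsilon/M\}$ to obtain at most $|B|\cdot\epsilon/M\leq\epsilon$.
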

\begin{proof}%{\Cref{claim:prob}}
  We begin with the proof of the first part. By the definition of expectation
  \begin{align*}
    \E_{u\leftarrow U}\left[\alpha_{f(u)}\right] & =
    \sum_{u\in U}\Pr[U=u]\cdot \alpha_{f(u)}=
    \sum_{i=1}^M\alpha_i\cdot\Pr_{u\leftarrow
      U}\left[\alpha_{f(u)}=\alpha_i\right] \geq \sum_{i=1}^{M} \alpha_i^2.
  \end{align*}
 This, together with the the Cauchy-Schwarz inequality, implies that
  \begin{align*}
    \E_{u\leftarrow U}\left[\alpha_{f(u)}\right] \geq \sum_{i=1}^{M} \alpha_i^2 &=
    \sum_{i=1}^{M} \alpha_i^2 \cdot \sum_{i=1}^M \(\frac{1}{\sqrt{M}}\)^2 \\ &\geq
    \(\sum_{i=1}^M \alpha_i \cdot \frac{1}{\sqrt{M}}\)^2 = \frac{1}{M},
  \end{align*}
  where the last equality follows from the fact that $\sum_{i=1}^M \alpha_i =
  1$.

  For the second part, let
  \begin{align*}
    B=\left\{i\in[M]\mid \alpha_i < \frac{\epsilon}{M}\right\}.
  \end{align*}
  Then,
  \begin{align*}
    \Pr_{u\leftarrow U}\left[\alpha_{f(u)} < \frac{\epsilon}{M}\right] =
    \Pr_{u \in U}[f(u)\in B]\leq \sum_{i\in B}\alpha_i\leq
    |B|\cdot\frac{\epsilon}{M}\leq \epsilon,
  \end{align*}
  as desired, where the first inequality follows from the union bound and the
  definition of $\alpha_i$, the second
  inequality follows from the definition of $B$, and the third inequality follows
  from the fact that $|B|\leq M$.
\end{proof}

\begin{definition}[Entropy]\label{def:entropy}
  Let $X$ be a random variable with finite support. The (Shannon) entropy of $X$
  is defined as
  \begin{align*}
    \ent(X) = \sum_{x\in\supp(X)} \Pr[X=x] \cdot \log \frac{1}{\Pr[X=x]} =
    \E_{x\leftarrow X}\left[ \log \frac{1}{\Pr[X=x]}\right].
  \end{align*}
\end{definition}

\begin{claim}\label{claim:ent2sd}
  Let $X$ be a random variable with domain $\B^k$. If $\ent(X) \geq
  k-\eps$, then
  \begin{align*}
    \SD(X, \U_k) \leq \sqrt{\frac{\eps}{2}},
  \end{align*}
  where $\U_k$ is the uniform distribution over $k$ bits, and where $\SD(X, \U_k)$ denotes the statistical distance between $X$ and $\U_k$ (see Equation~\eqref{eqn:SD} for the definition of statistical distance).
\end{claim}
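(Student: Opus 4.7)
The plan is to recognize the quantity $k - \ent(X)$ as (a base-$2$ rescaling of) the Kullback--Leibler divergence between $X$ and $\U_k$, and then invoke Pinsker's inequality to convert this divergence bound into a statistical-distance bound.

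First, I would write out the KL divergence (in base $2$) between $X$ and the uniform distribution $\U_k$ on $\{0,1\}^k$. Since $\U_k$ assigns probability $2^{-k}$ to each point,
\begin{align*}
D_2(X \,\|\, \U_k) \;=\; \sum_{x \in \{0,1\}^k} \Pr[X=x] \cdot \log\frac{\Pr[X=x]}{2^{-k}} \;=\; k \,-\, \ent(X) \;\le\; \eps,
\end{align*}
using the hypothesis that $\ent(X) \geq k - \eps$ and recalling that the entropy in \Cref{def:entropy} is defined with a base-$2$ logarithm.

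Second, I would pass to the natural-log version of KL divergence, which is what appears in the standard Pinsker inequality. Writing $D(X \,\|\, \U_k)$ for the KL divergence with $\ln$ in place of $\log$, we have $D(X \,\|\, \U_k) = (\ln 2)\cdot D_2(X \,\|\, \U_k) \leq (\ln 2)\cdot \eps \leq \eps$, where the last inequality uses $\ln 2 < 1$.

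Finally, I would apply Pinsker's inequality, which states that for any two distributions $P,Q$ on a common finite domain, $\SD(P,Q) \leq \sqrt{D(P\,\|\,Q)/2}$. Plugging in $P = X$ and $Q = \U_k$, this gives
\begin{align*}
\SD(X, \U_k) \;\le\; \sqrt{\frac{D(X\,\|\,\U_k)}{2}} \;\le\; \sqrt{\frac{\eps}{2}},
\end{align*}
as desired. The only subtlety here is being careful about the logarithm base when translating between the entropy bound (stated in base $2$) and Pinsker's inequality (usually stated in base $e$); since base-$2$ divergence exceeds natural-log divergence by a factor of $1/\ln 2 > 1$, the absorption of $\ln 2$ goes in the favorable direction and the clean bound $\sqrt{\eps/2}$ is obtained.
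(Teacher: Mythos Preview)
Your proposal is correct and follows essentially the same approach as the paper: compute $D_{\mathrm{KL}}(X\|\U_k)=k-\ent(X)\le\eps$ and then apply Pinsker's inequality, absorbing the $\ln 2$ factor via $\ln 2<1$ to obtain $\SD(X,\U_k)\le\sqrt{\eps/2}$. The only cosmetic difference is that the paper keeps the divergence in base~$2$ and states Pinsker with the explicit $\ln 2$ factor, whereas you convert to the natural-log divergence first; the computations are otherwise identical.
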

\newcommand{\RE}{\mathbf{D}_{\mathsf{KL}}}
\begin{proof}
  The \emph{relative entropy} (\aka the Kullback-Leibler divergence) between two
  distributions $\D_1,\D_2\subseteq \B^k$ is defined as
  \begin{align*}
    \RE(\D_1\|\D_2) = \sum_{x\in\B^k}\D_1(x)\cdot \log\( \frac{\D_1(x)}{\D_2(x)}
    \).
  \end{align*}
  A well known relation between relative entropy and the statistical distance is
  known as Pinsker's inequality which states that for any two distributions
  $\D_1,\D_2$ as above, it holds that
  \begin{align}\label{eq:pinsker}
    \SD(\D_1,\D_2) \leq \sqrt{\frac{\ln 2}{2}\cdot \RE(\D_1\|\D_2)}.
  \end{align}

  Thus, it remains to bound the relative entropy of $X$ and $\U_k$. Let $p_x =
  \Pr_{x\in\B^k}[X=x]$. We get that
  \begin{align*}
    \RE(X\|\U_k) & = \sum_{x\in\B^k}p_x\cdot \log\( p_x\cdot 2^{k} \)\\
    & = \sum_{x\in\B^k}p_x\cdot \(\log (p_x) + k \)\\
    & = -\ent(X) + k.
  \end{align*}
  Since $\ent(X) \geq k-\eps$, we get that
  \begin{align*}
    \RE(X\|\U_k) & \leq -k + \eps +k = \eps.
  \end{align*}
  Plugging this into Pinsker's inequality (see \Cref{eq:pinsker}), we get that
  \begin{align*}
    \SD(X, \U_k) \leq \sqrt{\frac{\ln 2}{2}\cdot \eps}\leq \sqrt{\frac{\eps}{2}}.
  \end{align*}
\end{proof}

\section{Compressing Communication in Distributed Protocols}\label{sec:proof}
In this section we show how to transform any $n$-party $d$-round $t$-adaptively secure
public-coin protocol, that outputs messages of length $m$ and sends messages of
length $L$, into an $n$-party $d$-round $t$-adaptively secure
public-coin protocol in which every party sends messages of length $\ell =
m\cdot\polylog(n,d)$.

Throughout this section, we fix $\mu^*$ to be the negligible function defined by
\begin{equation}\label{eqn:mu*}
  \mu^*=\mu^*(n,d)=\left(\sqrt{\epsilon}+1-(1-\epsilon)^{dn}\right)\cdot 2dn,
\end{equation}
  and where $\epsilon=2^{-\log^2(dn)}$.

\begin{theorem}\label{thm:main}
Fix any $m=m(n)$, $d=d(n)$, $L=L(n)$, and any $n$-party $d$-round public-coin
  selection protocol $\Pi$ that outputs messages in $\{0,1\}^m$ and in which all parties send
  messages of length $L=L(n)$.  Then, for any constant $\delta>0$, any $t=t(n)<n$, and any $s=s(m)$, if $\Pi$ is $(t,\delta,s)$-adaptively secure then there exists an $n$-party $d$-round
  $(t,\delta',s)$-adaptively secure public-coin selection protocol, that simulates~$\Pi$, where all parties send messages of length $\ell =
  m\cdot\log^4(n\cdot d)$, and where $\delta'\leq\delta+\mu^*$ (and $\mu^*=
\mu^*(n,d)$ is the negligible function defined in Equation~\eqref{eqn:mu*}).
\end{theorem}

 \begin{proof}

Fix any  $m=m(n)$, $d=d(n)$, $L=L(n)$,  and any $n$-party $d$-round public-coin
  protocol $\Pi$ that outputs messages in $\{0,1\}^m$ and in which all parties send
  messages of length $L=L(n)$. Fix any constant $\delta>0$, any $t=t(n)<n$, and any $s=s(m)$ such that $\Pi$ is $(t,\delta,s)$-adaptively secure. We start by describing the construction of the (short message) protocol. Let
\begin{equation}\label{eqn:N}
N=2^\ell=2^{m\cdot\log^4(n\cdot d)}.
\end{equation}
Let $$\Hc = \{H:[d\cdot n]\times\B^\ell\to\B^L\}$$ be the set all
possible $[d\cdot n]\times \B^\ell\equiv [d\cdot n]\times [N]$ matrices, whose elements are from $\B^L$.
Note that $|\Hc|= 2^{d\cdot n\cdot N\cdot L}$.   We often
interpret $H:[d\cdot n]\times\B^\ell\to\B^L$ as a function
$$H:[d]\times
[n]\times\B^\ell\to\B^L,
$$
or as a matrix where each row is described by a pair from $[d]\times[n]$.  We abuse notation and denote by
$$H(i,j,\sh)\eqdef H((i-1)n+j,\sh).$$
As a convention, we denote by $\lo$ a
message from $\B^L$ and by $\sh$ and a message from $\B^\ell$. \\

From now on, we assume for the sake of simplicity of notation, that in protocol~$\Pi$, in each round, all the parties send a message.
Recall that we also assume for the sake of simplicity (and without loss of
generality) that~$\Pi$ is a public-coin protocol (see
Definition~\ref{def:public-coin}). For any $H\in\Hc$ we define a protocol $\Pi_H$ that simulates the
execution of the protocol $\Pi$, as follows.

\paragraph{The Protocol~$\Pi_H$.}
In the protocol $\Pi_H$, for every $i\in[d]$ and $j\in[n]$,
in the $\ith{i}$ round, party $\party{j}$ sends a random string $\sh_{i,j}\leftarrow\{0,1\}^\ell$.
We denote the resulting transcript in round $i$ by
$$
\trans_{H,i}=(\sh_{i,1},\ldots,\sh_{i,n})\in\left(\{0,1\}^{\ell}\right)^n,
$$
and denote the entire transcript  by
$$\trans_H=(\trans_{H,1}\ldots,\trans_{H,d}).
$$
We abuse notation, and define for every round $i\in[d]$,
$$
H(\trans_{H,i})=(H(i,1,\sh_{i,1}),\ldots,H(i,n,\sh_{i,n})).
$$
Similarly, we define
$$
H(\trans_H)=(H(\trans_{H,1})\ldots,H(\trans_{H,d})).
$$
The outcome of protocol~$\Pi_H$ with transcript $\trans_H$ is defined to be the outcome of protocol~$\Pi$ with transcript $H(\trans_H)$.\\

It is easy to see that the round
complexity of $\Pi_H$ (for every $H\in\Hc$) is the same as that of~$\Pi$.  Moreover, we note that with some complication in notation we could have also preserved the exact communication pattern (instead of assuming that in each round all parties send a message).

In order to prove \Cref{lemma:many2few} it suffices to prove the following two lemmas.
\begin{lemma}\label{lemma:H_secure}
  There exists a subset $\Hc_0\subseteq\Hc$ of size $\frac{\abs{\Hc}}{2}$, such that for every matrix $H\in\Hc_0$ it holds that $\Pi_H$ is $(t,\delta',s)$-adaptively secure for $\delta'= \delta+\mu^*$, where $\mu^*$ is the negligible function defined in Equation~\eqref{eqn:mu*}.

\end{lemma}

\begin{lemma}\label{claim:simulation}
There exists a negligible function $\mu=\mu(n,d)$ such that,
$$
\Pr_{H\leftarrow \Hc}[\SD({\sf out}_{\Pi_H},{\sf out}_{\Pi})\leq\mu]\geq\frac{2}{3}.
$$
\end{lemma}

Indeed, given \Cref{lemma:H_secure,claim:simulation}, we obtain that there
exists an $H\in\Hc$ such that $\Pi_H$ is $(t,\delta', s)$-adaptively secure and
it simulates $\Pi$.
\end{proof}

In \Cref{sec:claim_simulation} we give the proof of \Cref{claim:simulation} and
in \Cref{sec:proof_of_lemma} we give the proof of \Cref{lemma:H_secure}.

\subsection{Proof of Lemma~\ref{claim:simulation}}\label{sec:claim_simulation}
  By the definition of statistical distance, in order to prove
  Lemma~\ref{claim:simulation} it suffices to prove that there exists a
  negligible function $\mu=\mu(n,d)$ such that,
$$
\Pr_{H\leftarrow \Hc}\left[\forall z\in\{0,1\}^m, \abs{\Pr[{\sf
      out}_{\Pi_H}=z]-\Pr[{\sf
      out}_{\Pi}=z]}\leq\frac{\mu}{2^m}\right]\geq\frac{2}{3}.
$$

Note that
\begin{align*}
  &\Pr_{H\leftarrow \Hc}\left[\forall z\in\{0,1\}^m, \abs{\Pr[{\sf out}_{\Pi_H}=z]-\Pr[{\sf out}_{\Pi}=z]}\leq\frac{\mu}{2^m}\right]=\\
  &1-\Pr_{H\leftarrow \Hc}\left[\exists z\in\{0,1\}^m, \abs{\Pr[{\sf out}_{\Pi_H}=z]-\Pr[{\sf out}_{\Pi}=z]}>\frac{\mu}{2^m}\right]\geq\\
  &1-\sum_{z\in\{0,1\}^m}\Pr_{H\leftarrow \Hc}\left[\abs{\Pr[{\sf
        out}_{\Pi_H}=z]-\Pr[{\sf out}_{\Pi}=z]}>\frac{\mu}{2^m}\right].
\end{align*}
Therefore, it suffices to prove that there exists a negligible function $\mu$
such that for every $z\in\{0,1\}^m$,
$$
\Pr_{H\leftarrow \Hc}\left[\abs{\Pr[{\sf out}_{\Pi_H}=z]-\Pr[{\sf
      out}_{\Pi}=z]}>\frac{\mu}{2^m}\right]\leq\frac{1}{3\cdot 2^m}.
$$
To this end, for any $z\in\{0,1\}^m$, we denote by $p_z=\Pr[{\sf out}_{\Pi}=z]$
and $p_{z,H}=\Pr[{\sf out}_{\Pi_H}=z]$. Using this notation, it suffices to
prove that there exists a negligible function $\mu$ such that for every
$z\in\{0,1\}^m$,
$$
\Pr_{H\leftarrow
  \Hc}\left[\abs{p_{z,H}-p_z}>\frac{\mu}{2^m}\right]\leq\frac{1}{3\cdot 2^m}.
$$

For any $H\in\Hc$, consider the experiment, where we run the protocol $\Pi_H$
independently $B=2^{m\cdot\log^3(nd)}$ times, and check how many times the
output is~$z$. Denote by $X_1,\ldots,X_B$ the identically distributed random
variables, where $X_i=1$ if in the $\ith{i}$ run of the protocol the outcome
is~$z$, and $X_i=0$ otherwise.  The Chernoff bound\footnote{The Chernoff bound
  states that for any identical and independent random variables
  $X_1,\ldots,X_B$, such that $X_i\in\{0,1\}$ for each~$i$, if we denote by
  $p=\E[X_i]$ then $\Pr[\left|\frac{1}{B}\sum_{i=1}^B
    X_i-p\right|\geq\delta]\leq e^{-\frac{\delta^2}{3} B}$.} implies that for
every $H\in\Hc$ and for every $\gamma>0$,
$$
\Pr\left[\left|\frac{1}{B}\sum_{i=1}^B X_i- p_{z,H}\right|\geq \gamma\right]\leq
e^{-\frac{\gamma^2\cdot B}{3}}.
$$
In particular, setting $\gamma=2^{-m\cdot \log^2(nd)}$ we deduce that
\begin{equation}\label{eqn:X}
  \Pr\left[\left|\frac{1}{B}\sum_{i=1}^B X_i-p_{z,H}\right|\geq \gamma\right]\leq e^{-2^{m\cdot \log^2(nd)}}.
\end{equation}

We next define random variables $Y_1,\ldots,Y_B$ as follows: We run the protocol
$\Pi$ independently $B$ times, and we set $Y_i=1$ if in the $\ith{i}$ run the
outcome is~$z$, and otherwise we set $Y_i=0$.  We note that the same argument
used to deduce Equation~\eqref{eqn:X} can be used to deduce that
\begin{equation}\label{eqn:Y}
  \Pr\left[\left|\frac{1}{B}\sum_{i=1}^B Y_i- p_{z}\right|\geq\gamma\right]\leq e^{-2^{m\cdot \log^2(nd)}}.
\end{equation}
Note that,
\begin{align*}
  &\Pr\left[\left|p_{z,H}-p_z\right|>4\gamma\right]\leq\\
  &\Pr\left[\left|p_{z,H}-\frac{1}{B}\sum_{i=1}^B X_i\right|+ \left|\frac{1}{B}\sum_{i=1}^B X_i-\frac{1}{B}\sum_{i=1}^B Y_i\right|+ \left|\frac{1}{B}\sum_{i=1}^B Y_i-p_z\right|>4\gamma\right]\leq\\
  &\Pr\left[\left|p_{z,H}-\frac{1}{B}\sum_{i=1}^B X_i\right|>\gamma\right]+\Pr\left[\left|\frac{1}{B}\sum_{i=1}^B X_i-\frac{1}{B}\sum_{i=1}^B Y_i\right|>2\gamma\right]+\Pr\left[\left|\frac{1}{B}\sum_{i=1}^B Y_i-p_z\right|>\gamma\right]\leq\\
  &2\cdot e^{-2^{m\cdot \log^2(nd)}}+\Pr\left[\left|\frac{1}{B}\sum_{i=1}^B
      X_i-\frac{1}{B}\sum_{i=1}^B Y_i\right|>2\gamma\right],
\end{align*}
where the first inequality follows from the triangle inequality, the second
inequality follows from the union bound, and the third inequality
follows from Equations~\eqref{eqn:X} and~\eqref{eqn:Y}.  Thus, it suffices to
prove that there exists a negligible function~$\mu=\mu(n,d)$ such that
$$
\Pr\left[\left|\frac{1}{B}\sum_{i=1}^B X_i-\frac{1}{B}\sum_{i=1}^B
    Y_i\right|>2\gamma\right]\leq\frac{\mu}{2^m}.
$$
To this end, notice that for a random $H\leftarrow\Hc$,
\begin{align*}
  &\SD\left(\left(X_1,\ldots,X_B\right),\left(Y_1,\ldots, Y_B\right)\right)\leq \\
  &\sum_{i=1}^B \SD\left(\left(X_1,\ldots,X_{i-1},X_i,Y_{i+1},\ldots,Y_B\right),\left(X_1,\ldots,X_{i-1},Y_i,Y_{i+1},\ldots,Y_B\right)\right)=\\
  &\sum_{i=1}^B \SD\left(\left(X_1,\ldots,X_{i-1},X_i\right),\left(X_1,\ldots,X_{i-1},Y_i\right)\right)\leq\\
  &B\cdot \SD\left(\left(X_1,\ldots,X_{B-1},X_B\right),\left(X_1,\ldots,X_{B-1},Y_B\right)\right)\leq\\
  &B\cdot nd\cdot\frac{(B-1)nd}{Nnd}\leq \\
  &\frac{B^2\cdot nd}{N}\leq\\
  &\frac{2^{2m\log^3(nd)}\cdot nd}{2^{m\log^4(nd)}}\leq\\
  &2^{-m\log^3(nd)},
\end{align*}
where the first equation follows from a standard hybrid argument. The second
equation follows from the fact that $Y_{i+1},\ldots,Y_B$ are independent of
$X_1,\ldots,X_i,Y_i$.  The third equation follows from the fact that the
statistical distance between $(X_1,\ldots,X_{i-1},X_i)$ and
$(X_1,\ldots,X_{i-1},Y_i)$ is maximal for $i=B$.  The forth equation follows
from the fact that $(X_1,\ldots,X_{B-1},X_B)$ and $(X_1,\ldots,X_{B-1},Y_B)$ are
identically distributed if the following event, which we denote by $\GOOD$,
occurs: Recall that each $X_i$ depends only on $nd$ {\em random} coordinates of
$H\leftarrow\Hc$. We say that $\GOOD$ occurs if the $nd$ coordinates that $X_B$
depends on are disjoint from all the $nd(B-1)$ coordinates that
$X_1,\ldots,X_{B-1}$ depend on. The forth equation follows from the fact that
$\Pr[\neg{\GOOD}]\leq nd\cdot\frac{(B-1)nd}{Nnd}$. The rest of the equations
follow from basic arithmetics and from the definition of~$B$ and~$N$.

In particular, this implies that
\begin{equation}\label{eqn:SDXY}
  \SD\left(\left(\frac{1}{B}\sum_{i=1}^B X_i\right), \left(\frac{1}{B}\sum_{i=1}^B Y_i\right)\right)\leq 2^{-m\log^3(nd)}.
\end{equation}

Consider the algorithm ${\cal D}$ that given $p'_z$, supposedly
distributed according to $\frac{1}{B}\sum_{i=1}^B X_i$ or distributed according
to $\frac{1}{B}\sum_{i=1}^B Y_i$, outputs~$1$ if $|p'_z-p_z|\leq \gamma$, and
otherwise outputs~$0$.  Equation~\eqref{eqn:Y} implies that
$$
\Pr\left[{\cal D}\left(\frac{1}{B}\sum_{i=1}^B Y_i\right)=1\right]\geq
1-e^{-2^{m\cdot \log^2(nd)}}.
$$
This together with Equation~\eqref{eqn:SDXY}, implies that
$$
\Pr\left[{\cal D}\left(\frac{1}{B}\sum_{i=1}^B X_i\right)=1\right]\geq
1-e^{-2^{m\cdot \log^2(nd)}}-2^{-m\log^3(nd)}\geq 1-2^{-m\log^2(nd)},
$$
which by the definition of ${\cal D}$, implies that
$$
\Pr\left[\left|\frac{1}{B}\sum_{i=1}^B X_i-p_z\right|\leq \gamma\right]\geq 1-
2^{-m\log^2(nd)}.
$$
This, in particular, implies that
$$
\Pr\left[\left|\frac{1}{B}\sum_{i=1}^B X_i-\frac{1}{B}\sum_{i=1}^B
    Y_i\right|\leq 2\gamma\right]\geq 1-2\cdot 2^{-m\log^2(nd)},
$$
as desired.

\subsection{Proof of Lemma \ref{lemma:H_secure}}\label{sec:proof_of_lemma}
Assume towards contradiction that for every set $\Hc_0\subseteq\Hc$ of size $\frac{\abs{\Hc}}{2}$ there exists $H\in\Hc_0$ such that $\Pi_H$ is
\emph{not} $(t,\delta',s)$-adaptively secure, for $\delta'= \delta+\mu^*$. This implies that there exists a set $\Hc_0\subseteq\Hc$ of size $\frac{\abs{\Hc}}{2}$ such that for every $H\in\Hc_0$ there exists an
adversary $\Adv^H$ that adaptively corrupts at most~$t$ parties and satisfies $$\valset{\Pi_H}{\left(\Adv^H\right)} \geq \delta'.$$

This, in turn, implies that there exists a set $M\subseteq\{0,1\}^m$ of size~$s>0$
such that for at least $1/{{2^m}\choose{s}}$-fraction of the
$H$'s in $\Hc_0$ the adversary $\Adv^H$ satisfies that $\valb{M}{\Pi_H}{\left(\Adv^H\right)}\geq\delta'$.  We denote this
set of $H$'s by $\Hc_1$. Notice that
\begin{align}\label{eqn:H1}
  \abs{\Hc_1} \geq \frac{\abs{\Hc_0}}{{2^m \choose s}} = \frac{\abs{\Hc}}{2\cdot{2^m
      \choose s}} \geq \frac{\abs{\Hc}}{2^{2^m}} = 2^{dnNL-2^m}.
\end{align}

The proof proceeds as follows: we show how to use these adversaries
$\{\Adv^H\}_{H\in \Hc_1}$ to construct an adversary $\Adv$ such that
$$\valb{M}{\Pi}{\Adv}\geq\delta'-\mu^*/2=\delta+\mu^*-\mu^*/2>\delta,$$
contradicting the $(t,\delta,s)$-adaptive security of~$\Pi$.

The idea is for the adversary $\Adv$ to simulate the execution of one of the
$\Adv^H$'s. The problem is that we do not know ahead of time which $H$ will be
consistent with the transcript of the protocol, since we have no control over the (long) random messages of the honest parties.
We overcome this problem by choosing $H$ {\em adaptively}.  Namely, at any point in the protocol, $\Adv$
simulates a random adversary $\Adv^H$, where $H$ is a random matrix that
is consistent (in some sense that we explain later) with the transcript up to that point.

More specifically, for every $i\in[d]$ and every $j\in[n]$, we denote by
$\Hc_{i,j-1}$ the set of matrices that are consistent with the transcript up
until the point where the $\ith{j}$ message of the $\ith{i}$ round is about to
be sent.  Fix any round $i\in[d]$ and any $j\in[n]$. Roughly speaking, in the $\ith{i}$ round before the $\ith{j}$ message is to be sent, the adversary $\Adv$
simulates $\Adv^{H^*}$ where $H^*\leftarrow\Hc_{i,j-1}$ is chosen
uniformly at random. If $\Adv^{H^*}$ corrupts a party $\party{u}$ then $\Adv$
also corrupts $\party{u}$.  If $\Adv^{H^*}$ sends a message $\sh_i^*$ on behalf
of a corrupted party $\party{u}$, then $\Adv$ will send the message
$\lo^*_i=H^*(i,u, \sh_i^*)$ on behalf of party $\party{u}$.  In this case, we
define $\Hc_{i,j}$ to be all the matrices in $\Hc_{i,j-1}$ which are consistent
with the transcript so far and agree with $H^*$ on row
$(i,u)$.  If $\Adv^{H^*}$ asks an honest party $\party{u}$ to send its message,
the adversary~$\Adv$ will also ask honest party $\party{u}$ to send a message. Upon
receiving a message $\lo^*$ from $\party{u}$, we choose a random matrix $H\leftarrow\Hc_{i,j-1}$ that is consistent with the transcript so far, and set $\Hc_{i,j}$ to be all the
matrices in $\Hc_{i,j-1}$ that are consistent with the transcript so far, and
where we fix the $(i,u)$ row to be the $(i,u)$ row of~$H$.

Before giving the precise description of the adversary $\Adv$, we provide some
useful notation.  We denote the transcript generated in an execution of the
protocol~$\Pi$ with an adversary $\Adv$ by $\trans_{\real}$. Note that
$\trans_{\real}$ consists of $d$ vectors (one per each round), where each vector
consists of~$n$ pairs of the form
\begin{align*}
  ((\party{j_1},\lo_1),\ldots,(\party{j_{n}},\lo_n)),
\end{align*}
where $\lo_1,\ldots\lo_n\in\B^L$ and $j_1,\dots,j_n\in[n]$, where the order
means that in this round party $\party{j_1}$ sent his message first, then party
$\party{j_2}$ sent his message, and so on (recall that in our model, the adversary has control over the scheduling of the messages within each round). We sometimes consider a partial
transcript $\trans_{i,j}$ (\ie a prefix of a transcript) which corresponds to a
partial execution of the protocol~$\Pi$ with the adversary~$\Adv$ until after the $\ith{j}$ message in the
$\ith{i}$ round was sent. For $H\in\Hc$, we denote by
\begin{align*}
  \MAP_H\colon [d]\times [n]\times\B^L \to \B^\ell \cup \left\{\bot\right\}
\end{align*}
the mapping that takes as input a row number $(i,j)\in[d]\times[n]$ and a (long) message
in $\lo\in \B^L$, and converts it into a (short) message $\sh\in\B^\ell$ such
that $H(i, j, \sh) = \lo$.  If no such message exists, $\MAP_H$ outputs~$\bot$.

Let $\trans_{i,j}$ be a (long) partial transcript of $\Pi$. The corresponding
(short) transcript of $\Pi_H$, denoted by $\MAP_H(\trans_{i,j})$, is defined
recursively, as follows.  Let $\trans_{i,j}=(\trans_{i,j-1}, (\party{u},
\lo))$.  Then,
\begin{align*}
  \MAP_H(\trans_{i,j}) = \big(\MAP_H(\trans_{i,j-1}), (\party{u}, \MAP_H(i, u,
  \lo))\big).
\end{align*}
We initialize $\trans_{1,0}=\emptyset$ and $\Hc_{1,0}=\Hc_1$.  Using this
notation, a formal description of the adversary $\Adv$ is given in
Figure~\ref{fig:adv_long2short}. \\

\begin{algorithm}
  \caption{The adversary $\Adv$ before the $\ith{j}$ message of round
    $i$.} \label{fig:adv_long2short}%
%  \vspace{-0.2cm}

  \center{\bf The adversary $\boldsymbol{\Adv(\trans_{i,j-1})}$ before the
    $\ith{j}$ message of round $i$ }
  \begin{MyEnumerate}
  \item If $\Hc_{i,j-1} = \emptyset$, output $\bot$ and HALT.
  \item\label{it:2} Choose $H^*\leftarrow \Hc_{i,j-1}$ uniformly at random.  Let
    $\trans_{H^*}=\MAP_{H^*}(\trans_{i,j-1})$ denote the (short) transcript in
    the protocol $\Pi_{H^*}$ that corresponds to the (long) transcript
    $\trans_{i,j-1}$.
  \item\label{it:3} If $\Adv^{H^*}(\trans_{H^*})$ corrupts a party $\party{u}$
    then corrupt $\party{u}$.
  \item\label{it:4} If $\Adv^{H^*}(\trans_{H^*})$ sends a message on behalf of a
    corrupt party $\party{u}$, then do the following:
    \begin{MyEnumerate}{}
    \item Denote by $\sh^*\in\{0,1\}^\ell$ the message that
      $\Adv^{H^*}(\trans_{H^*})$ sends on behalf of $\party{u}$. Let $\lo^* =
      H^*(i,u,\sh^*)$.
    \item Send the message $\lo^*$ on behalf of party $\party{u}$.
    \item Add $(\party{u},\lo^*)$ to the partial transcript.  Namely, set
      \begin{align*}
        \trans_{i,j}=\(\trans_{i,j-1},(\party{u},\lo^*)\).
      \end{align*}

    \item\label{it:4d} Define $\Hc_{i,j}$ to be the set of all $H\in\Hc_{i,j-1}$
      that are consistent with the transcript so far, and for which
      $H(i,u,\cdot) = H^*(i,u,\cdot)$. Namely, set
        \begin{align*}
          \Hc_{i,j} = \big\{ H \in \Hc_{i,j-1}\mid\;& \forall \sh\colon\;
          H(i,u,\sh) = H^*(i,u,\sh), \text{and} \\
          & \text{$\Adv^H(\trans_H)$ sends $\sh^*$ on behalf of $\party{u}$},
            \\
            & \quad \text{where $\trans_H=\MAP_H(\trans_{i,j-1})$} \big\}.
        \end{align*}
      \end{MyEnumerate}
    \item\label{it:5} If $\Adv^{H^*}(\trans_{H^*})$ does not corrupt, and orders
      an honest party $\party{u}$ to send a message, then do the following:
    \begin{enumerate}
    \item Do not corrupt, and order honest party $\party{u}$ to send a
      message. Denote the message it sends by $\lo^*$.
     \item Add $(\party{u},\lo^*)$ to the partial transcript.  Namely, set
      \begin{align*}
        \trans_{i,j}=(\trans_{i,j-1},(\party{u},\lo^*)).
      \end{align*}
    \item Choose a random matrix
      \begin{align*}
        H'\leftarrow \{H\in \Hc_{i,j-1} \mid\; & \Adv^H(\trans_H) \text{ orders
          honest
          $\party{u}$ to send a message, and } \\
        & \exists \sh \text{ s.t. }H(i,u,\sh)=\lo^* \}.
      \end{align*}

    \item\label{it:5c} Define $\Hc_{i,j}$ to be the set of all $H\in
      \Hc_{i,j-1}$ that are consistent with the transcript so far, and agree
      with $H'$ on row $(i,u)$. That is,
      \begin{align*}
        \Hc_{i,j} = \{ H \in \Hc_{i,j-1} \mid\;& \forall \sh\colon\;
        H(i,u,\sh) = H'(i,u,\sh), \text{ and }\\
        & \Adv^H(\trans_H) \text{ orders honest
          $\party{u}$ to send a message}\}.
      \end{align*}
       \end{enumerate}
  \item\label{it6} If $j=n$, set $\Hc_{i+1,0} = \Hc_{i,j}$ and $\trans_{i+1,0} =
    \trans_{i,j}$.
  \end{MyEnumerate}
\end{algorithm}

In order to prove \Cref{lemma:H_secure} (and thus to complete the proof of
\Cref{lemma:many2few}), it suffices to prove the following lemma.
\begin{lemma}\label{lemma:reformulated}
  The adversary $\Adv$ makes at most $t$ adaptively-chosen corruptions, and $\valb{M}{\Pi}{\Adv}\geq\delta'-\mu^*/2$.
\end{lemma}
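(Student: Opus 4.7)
The plan is to establish the two assertions separately. For the corruption bound, the key is the invariant, established by induction on the step $(i,j)$, that every $H \in \Hc_{i,j}$ has the property that $\Adv^H$ on $\MAP_H(\trans_{i,j})$ performs exactly the same sequence of corruptions and short messages as $\Adv$ has issued on $\trans_{i,j}$. This invariant is built into steps~\ref{it:4d} and~\ref{it:5c}, which define $\Hc_{i,j}$ as precisely the subset of matrices consistent with the history (both the long-message transcript and the induced adversary behavior). Since every $H \in \Hc_1$ satisfies that $\Adv^H$ makes at most $t$ adaptive corruptions, and since $\Adv$'s corruption set at any step coincides with that of any surviving $H \in \Hc_{i,j}$, the bound $\leq t$ transfers to $\Adv$; the case in which $\Adv$ halts with $\bot$ is immediate, since at that point no additional party is corrupted.

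For the bias bound, the plan is to compare $\Adv$'s execution of $\Pi$ to a mental experiment $\Pi'$ that samples $H \leftarrow \Hc_1$ uniformly and then runs $\Pi_H$ with $\Adv^H$. By the definition of $\Hc_1$, the bias of $\Pi'$ towards $M$ relative to an honest run of $\Pi_H$ is at least $\delta'$; Lemma~\ref{claim:simulation} lets me swap the honest-$\Pi_H$ baseline for an honest-$\Pi$ baseline at $\negl(n)$ cost, so $\Pi'$ achieves bias at least $\delta'-\negl(n)$ against honest $\Pi$. It then suffices to show that the output distribution of $\Pi$ under $\Adv$ is $\mu^*/2$-close (in total variation) to the output distribution of $\Pi'$. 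The crux is to bound the probability of the abort event $E=\{\Hc_{i,j-1}=\emptyset\ \text{at some step}\}$: conditioned on $\neg E$, the observed long transcript is $H(\trans_H)$ for some valid execution of $\Pi_H$ with $\Adv^H$ and some $H\in\Hc_{d,n}\subseteq\Hc_1$, and therefore $\Adv$'s output distribution conditional on $\neg E$ matches $\Pi'$'s output distribution.

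The main technical work is the bound $\Pr[E]\leq\mu^*/2$. An abort can only happen at an honest-party step (step~\ref{it:5}), because in a corrupt step the sampled $H^*$ itself always lies in the updated $\Hc_{i,j}$ by construction. At an honest step the failure condition is that the uniform long message $\lo^*$ misses $\bigcup_{H\in\Hc''}H(i,u,\{0,1\}^\ell)$, where $\Hc''\subseteq\Hc_{i,j-1}$ is the set of matrices whose induced $\Adv^H$ orders $\party{u}$ honest. I will apply Claim~\ref{claim:prob} with $U=\Hc''\times\{0,1\}^\ell$, $M=2^L$, and $f(H,\sh)=H(i,u,\sh)$: with probability at least $1-\epsilon$ over $(H,\sh)$ the image element carries weight $\geq\epsilon/2^L$, which translates into the Shannon-entropy lower bound $\ent(D)\geq L-O(\epsilon)$ on the induced distribution $D$ over $\{0,1\}^L$. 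Claim~\ref{claim:ent2sd} then gives $\SD(D,\U_L)\leq\sqrt{\epsilon/2}$, so a uniform $\lo^*$ lies outside $\supp(D)$ with probability at most $\sqrt{\epsilon/2}$. Union-bounding over the at most $2dn$ sampling events in $\Adv$'s execution (one per $H^*$ draw in step~\ref{it:2} and one per $H'$ draw in step~\ref{it:5}), combined with a $1-(1-\epsilon)^{dn}$ product-type term capturing the per-round flatness condition across $dn$ rounds, will yield exactly the bound $\mu^*/2$ prescribed by~\eqref{eqn:mu*}.

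The principal obstacle will be the per-step flatness estimate itself: because $\Hc_{i,j-1}$ can shrink substantially through the full-row fixings of step~\ref{it:5c} together with the $\Adv^H$-behavior conditioning of step~\ref{it:4d}, the argument needs to carefully control the marginal distribution of the row $(i,u)$ inside $\Hc_{i,j-1}$. This will rely on the initial density bound $|\Hc_1|\geq 2^{dnNL-2^m}$ from~\eqref{eqn:H1} together with the fact that $\ell=m\log^4(nd)$ is chosen large enough that $N=2^\ell$ provides enough ``coverage'' of $\{0,1\}^L$ to make the entropy-conservation calculation close—and on the compounding of flatness conditions across rounds into the $1-(1-\epsilon)^{dn}$ factor of $\mu^*$.
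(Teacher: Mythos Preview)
Your corruption-bound argument is fine and matches the paper. The bias-bound argument, however, has a genuine gap at its core.

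You assert that conditioned on $\neg E$, the output distribution of $\Pi$ under $\Adv$ \emph{equals} that of $\Pi'$ (the ideal experiment $H\leftarrow\Hc_1$, run $\Pi_H$ with $\Adv^H$). This is false. Non-abort only guarantees that the real transcript lies in the \emph{support} of $\{H(\trans_H):H\in\Hc_1\}$; it says nothing about the two distributions agreeing. Concretely, at an honest step the real message is uniform in $\{0,1\}^L$, whereas in the ideal experiment it is $H(i,u,\sh)$ for random $H$ and $\sh$; these can differ substantially even when the support of the latter is all of $\{0,1\}^L$. The paper does not reduce to an abort probability; it runs a full hybrid over the $dn$ message slots (experiments $\Exp^{(i,j)}$), and at each honest slot compares the two distributions above directly, showing $\SD\leq \sqrt{\epsilon}+1-(1-\epsilon)^{k-1}$.

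Your use of Claim~\ref{claim:prob} is also in the wrong direction. Applying it with $U=\Hc''\times\{0,1\}^\ell$, $M=2^L$, $f(H,\sh)=H(i,u,\sh)$ tells you that with probability $\geq 1-\epsilon$ the sampled image has mass $\geq \epsilon/2^L$; this is a \emph{lower} bound on point masses and cannot yield a Shannon-entropy \emph{lower} bound (a point mass at a single value satisfies your conclusion yet has entropy $0$). In the paper, Claim~\ref{claim:prob} is used with $M=2^{NL}\cdot 4nN$ (the number of possible ``row plus adversary-action'' choices) to maintain inductively a lower bound on $|\Hc_{i,j-1}|$ (Equation~\eqref{eqn:sizeH}). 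That size bound is what gives $\ent(H\mid\event)$ large, hence $\ent(\row_{i,u}\mid\event)\geq NL-o(N)$, hence $\ent(R'\mid\event)\geq L-\epsilon$, and only then does Claim~\ref{claim:ent2sd} deliver $\SD(R',\U_L)\leq\sqrt{\epsilon}$. The $(1-\epsilon)^{k-1}$ factor in $\mu^*$ is the probability that the size invariant has held through step $k-1$, not a ``per-round flatness'' union bound as you describe.
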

\begin{proof}
We first note that $\Adv$ always makes at most~$t$ corruptions. This follows from the fact that~$\Adv$ is always consistent with some adversary $\Adv^H$, for some $H\in\Hc_1$ (or else $\Adv$ aborts), and by our assumption, every $\Adv^H$ makes at most~$t$ corruptions.

We next prove that
$\valb{M}{\Pi}{\Adv}\geq\delta'-\mu^*/2$.
Recall that we denote by $\trans_\Adv$ the random variable that corresponds to the
  transcript generated by running the protocol $\Pi$ with the adversary~$\Adv$
  (described in Figure~\ref{fig:adv_long2short}).

  Let $\trans_{\ideal}$ be an ``ideal'' transcript, generated as follows: Choose
  a random $H\leftarrow\Hc_1$, run the protocol $\Pi_H$ with the adversary
  $\Adv^H$.  Denote the resulting transcript by $\trans_H$.  As above,
  $\trans_H$ consists of $d$ vectors (one per each round), where each vector
consists of $n$ pairs of the form
\begin{align*}
  ((\party{j_1},\sh_1),\ldots,(\party{j_{n}},\sh_n)),
\end{align*}
where $\sh_1,\ldots\sh_n\in\B^\ell$ and $j_1,\dots,j_n\in[n]$.  We define
  $$
  \trans_\ideal=H(\trans_H)
  $$
 where $H(\trans_H)$ is the transcript obtained by applying $H(i,u,\cdot)$ to each element in the $\ith{(i,u)}$ row of $\trans_H$. Formally, $H(\trans_H)$  is defined recursively, as follows: For every $i\in[d]$ and every $j\in[n]$, we let $\trans_{H,i,j}$ denote the transcript $\trans_H$ up until after the $\ith{j}$ message in the $\ith{i}$ round is sent.
 We define $H(\trans_{H,i,j})$ recursively, as follows:  For $\trans_{H,i,j}=(\trans_{H,i,j-1},(\party{u},\sh))$, we define
 $$
 H(\trans_{H,i,j})=(H(\trans_{H,i,j-1}),(\party{u},H(i,u,\sh))).
 $$

 In order to prove Lemma~\ref{lemma:reformulated} it suffices to prove the
 following claim.

 \begin{claim}\label{claim:SD}
   \begin{align*}
     \SD(\trans_{\real},\trans_{\ideal}) =\mu^*/2,
   \end{align*}
 \end{claim}

\begin{proof}%{\Cref{claim:SD}}
  We prove Claim~\ref{claim:SD} using a hybrid argument. Specifically, we define
  a sequence of $d\cdot (n+1)$ experiments.  For every $i\in[d]$ and every
  $j\in\{0,1,\ldots,n\}$, we define the experiment $\Exp^{(i,j)}$ as follows:
  \begin{enumerate}
  \item Generate $\trans_{i,j}$ and $\Hc_{i,j}$, as defined in
    Figure~\ref{fig:adv_long2short}.
  \item\label{it:12} Choose a random $H\leftarrow \Hc_{i,j}$, and let
    $\trans_{H,i,j}=\MAP_H(\trans_{i,j})$.
  \item\label{it:13} Run the protocol $\Pi_H$ with the adversary $\Adv^H$, given
    the partial transcript $\trans_{H,i,j}$.  Namely, run $\Pi_H$ with $\Adv^H$ from after the $\ith{j}$ message in the $\ith{i}$ round was sent, and assume the transcript up until that point is $\trans_{H,i,j}$.  Denote the entire transcript
    (including $\trans_{H,i,j}$) by $\trans_H$.
  \item\label{it:14} Output $H(\trans_H)$.
  \end{enumerate}
  Notice that
  \begin{align*}
  \Exp^{(d,n)}\equiv \trans_\real,
  \end{align*}
  and
  \begin{align*}
    \Exp^{(1,0)}\equiv \trans_\ideal.
  \end{align*}

  It remains to argue that for every $i\in[d]$ and every $j\in[n]$ the
  statistical distance between any two consecutive experiments $\Exp^{(i,j-1)}$
  and $\Exp^{(i,j)}$ is small. In particular, it suffices to prove that
  \begin{align}\label{eq:hybrid}
    \SD\left(\Exp^{(i,j-1)}, \Exp^{(i,j)}\right) =\frac{\mu^*}{2dn}.
  \end{align}
The reason is that
given this inequality, we obtain that
  \begin{align*}
    \SD(\trans_{\real},\trans_{\ideal}) \leq \sum_{i\in[d],j\in[n]} \SD(\Exp^{(i,j-1)},
    \Exp^{(i,j)}) \leq d\cdot n \cdot\frac{\mu^*}{2dn}=\frac{\mu^*}{2},
  \end{align*}
   which completes the claim.  We note that the first inequality follows from the union bound together with the fact that $\Exp^{(i,n)}=\Exp^{(i+1,0)}$ for every $i\in[d-1]$ (see Figure~\ref{fig:adv_long2short} Item~\ref{it6}).\\

  We proceed with the proof of \Cref{eq:hybrid}. To this end, fix any $i\in[d]$
  and $j\in[n]$.  Let $k\eqdef (i-1)\cdot d+j$.  Note that in both
  $\Exp^{i,j-1}$ and $\Exp^{i,j}$ the first $k-1$ messages are generated according
  to $\trans_{\Adv}$.

  Denote by $\corr$ the event that the $\ith{k}$ message is sent by a
  corrupted party.  We first argue that
  $$
  \Pr\left[\corr\mid\Exp^{(i,j-1)}\right]=  \Pr\left[\corr\mid\Exp^{(i,j)}\right].
  $$
  This follows immediately from the definition of the two experiments.  In $\Exp^{(i,j)}$ (according to Figure~\ref{fig:adv_long2short}, Items~\ref{it:2}-\ref{it:4}), before sending the $\ith{k}$ message, a random function is chosen $H^*\leftarrow\Hc_{i,j-1}$ and the $\ith{k}$ message is sent by a corrupted party if and only if $\Adv^{H^*}$ chooses the $\ith{k}$ message to be sent by a corrupted party (given the transcript so far).
  Note that in $\Exp^{(i,j-1)}$, the same exact process occurs (see
  \Cref{it:12,it:13,it:14} at the beginning of the proof of
  Claim~\ref{claim:SD}).

  We next argue
  \begin{equation}\label{eqn:Exp-corr}
    \SD\(\(\Exp^{(i,j-1)}\mid \corr\), \(\Exp^{(i,j)} \mid \corr\)\) = 0.
  \end{equation}
  To see why \Cref{eqn:Exp-corr} holds, note that according to
  Figure~\ref{fig:adv_long2short} (see \Cref{it:2,it:3,it:4}), the
  $\ith{k}$ message in $\left(\Exp^{(i,j)}\mid \corr\right)$ is chosen by
  sampling a random matrix $H^*\leftarrow\Hc_{i,j-1}$ conditioned on the fact
  that the $\ith{k}$ message sent in $\Pi_{H^*}$ with $\Adv^{H^*}$ is sent
  by a corrupted party. Denote this corrupted party by $\party{u}$ and denote by
  $\sh^*$ the message that $\Adv^{H^*}$ sends on behalf of $\party{u}$.  Then
  the $\ith{k}$ message in $\Exp^{(i,j)}$ is set to be $H^*(i,u,\sh^*)$.
  Note that the $\ith{k}$ message in $\Exp^{(i,j-1)}$ is chosen in exactly the same way (see
  \Cref{it:12,it:13,it:14} at the beginning of the proof of
  Claim~\ref{claim:SD}).  Moreover, the distribution of the set $\Hc_{i,j}$ in both cases is identical, which implies that the distributions of the rest of the messages in $\(\Exp^{(i,j-1)}\mid
  \corr\)$ and in $\(\Exp^{(i,j)}\mid \corr\)$ are identical as well.

  It remains to prove that
  \begin{align}\label{eqn:hybrid-corr}
    \SD\left(\left(\Exp^{(i,j-1)}\mid \neg{\corr}\right),
      \left(\Exp^{(i,j)}\mid\neg{\corr}\right)\right) = \frac{\mu^*}{2dn}.
  \end{align}

Recall that in $\left(\Exp^{(i,j)}\mid \neg{\corr}\right)$ the $\ith{k}$ message is uniformly distributed in $\B^L$. Denote by $R'$ the $\ith{k}$ message in $\left(\Exp^{(i,j-1)}\mid \neg{\corr}\right)$.  Recall that $R'$ is distributed as follows:  Choose a random $H\leftarrow\Hc_{i,j-1}$ such that the adversary $\Adv$ (given the partial transcript $\MAP_H(\trans_{i,j-1})$) orders an honest party $\party{u}$ to send the $\ith{j}$ message in the $\ith{i}$ round.  Choose a random $r'\leftarrow\{0,1\}^\ell$, and and set
$R'=H(i,u,r')$.

Notice that in order to prove Equation~\eqref{eqn:hybrid-corr}, it suffices to prove that
\begin{equation}\label{SD:R'}
\SD(R',\U_L)=\frac{\mu^*}{2dn}.
\end{equation}

Recall that we fixed $\epsilon=2^{-\log^2 (dn)}$.
We argue that in order to prove Equation~\eqref{SD:R'} it suffices to prove
that,
  \begin{equation}\label{eqn:sizeH}
    \Pr\left[\abs{\Hc_{i,j-1}}\geq\frac{2^{dnN L}}{2^{(k-1)NL}\cdot \left(\frac{4nN}{\epsilon}\right)^{k-1}\cdot 2^{2^m}}\right]\geq (1-\epsilon)^{k-1},
  \end{equation}
  where the probability is over the randomness of the honest parties.

To this end, suppose that
Inequality~\eqref{eqn:sizeH} holds.
Denote by $\event$ the event that
\begin{equation}\label{eqn:event}
\abs{\Hc_{i,j-1}}\geq\frac{2^{dnN L}}{2^{(k-1)NL}\cdot \left(\frac{4nN}{\epsilon}\right)^{k-1}\cdot 2^{2^m}}.
\end{equation}
By Inequality~\eqref{eqn:sizeH},
$$
\Pr[\event]\geq(1-\epsilon)^{k-1}.
$$
Therefore,
\begin{align*}
&\SD(R',\U_L)\leq\\
&\SD((R'\mid\event),\U_L)\cdot\Pr[\event]+\SD((R'\mid\neg{\event}),\U_L)\cdot\Pr[\neg{\event}]\leq \\
&\SD((R'\mid\event),\U_L)+\Pr[\neg{\event}]\leq\\
&\SD((R'\mid\event),\U_L)+1-(1-\epsilon)^{k-1}.
\end{align*}
This, together with the definition of $\mu^*$ (see Equation~\eqref{eqn:mu*}), implies that in order to prove Equation~\eqref{SD:R'} it suffices to prove that
$$
\SD((R'\mid\event),\U_L)\leq\sqrt{\epsilon}.
$$
This, together with Claim~\ref{claim:ent2sd}, implies that it suffices to prove
that
\begin{equation}\label{eqn:entR}
\ent(\lo'\mid\event)\geq L-\epsilon.
\end{equation}
To this end, let $H\leftarrow\Hc_{i,j-1}$.  Then,
\begin{align*}
&\ent(H\mid \event)\geq \\
&dnNL-(k-1)NL-(k-1)(\log {4nN})-(k-1)\log \frac{1}{\epsilon}-2^m=\\
&(dn-k+1)NL-(k-1)\left(\log {4nN}+\log \frac{1}{\epsilon}\right)-2^m,
\end{align*}
where the first inequality follows from Equation~\eqref{eqn:event} together with the definition of entropy (see Definition~\ref{def:entropy}), and the latter equality follows from basic arithmetics.

For every $\alpha\in[d]$ and every $\beta\in[n]$, we denote by $\row_{\alpha,\beta}\in\{0,1\}^{NL}$ the random variable obtained by choosing a random matrix  $H\leftarrow\Hc_{i,j-1}$, and setting $\row_{\alpha,\beta}$ to be the $\ith{(\alpha,\beta)}$ row of~$H$.
Note that
$$
\ent(H\mid \event)\leq\sum_{\alpha\in[d],\beta\in[n]}\ent(\row_{\alpha,\beta}\mid \event)\leq \ent(\row_{i,u}\mid \event)+NL(dn-k),
$$
where the first inequality follows from the basic property of Shannon entropy,  that for any random variables $X$ and $Y$, it holds that $\ent(X,Y)\leq\ent(X)+\ent(Y)$, and the second equality follows from the fact that $k-1$ of the rows in $\Hc_{i,j-1}$ are fixed.
This, together with the equations above, implies that
\begin{align*}
&\ent(\row_{i,u}\mid \event)\geq\\
& (dn-k+1)NL-(k-1)\left(\log {4nN}+\log \frac{1}{\epsilon}\right)-2^m-NL(dn-k)=\\
& NL-(k-1)\left(\log {4nN}+\log \frac{1}{\epsilon}\right)-2^m=\\
& NL-(k-1)\left(\log {4nN}+\log^2(dn)\right)-2^m.\\
\end{align*}

Recall that $(R'\mid\event)$ is the random variable defined by choosing $H\leftarrow\Hc_{i,j-1}$ (where we assume that event $\event$ holds for $\Hc_{i,j-1}$), choosing a random $\alpha\leftarrow[N]$, and setting $R'=H(i,u,\alpha)$.
Thus,
\begin{align*}
&\ent(R'\mid\event)\geq \\
&\frac{NL-(k-1)(\log {4nN}+\log^2(dn))-2^m}{N}=\\
&L-\frac{(k-1)(\log {4nN}+\log^2(dn))+2^m}{N}\geq \\
&L-\epsilon,
\end{align*}
proving \Cref{eqn:entR}, where the latter inequality follows from the definition of~$N$ (see Equation~\eqref{eqn:N}).\\

It remains to prove Inequality~\eqref{eqn:sizeH}.  We prove that Inequality~\eqref{eqn:sizeH} holds for any $(i,j)\in[d]\times\{0,1,\ldots,n\}$.  The proof is  by induction on $k=(i-1)\cdot n+j$.
The base case is $k=0$, which corresponds to $(i,j)=(1,0)$.
In this case, it is always holds that
$$
\abs{\Hc_{i,j}}=\abs{\Hc_{1,0}}=\abs{\Hc_1}\geq \frac{2^{dnN L}}{2^{2^m}},
$$
where the latter inequality follows from the definition of~$\Hc_1$ (see \Cref{eqn:H1}).

Next, assume that Inequality~\eqref{eqn:sizeH} holds for $k-1$, and we prove that it holds for~$k$.
Fix $i\in[d]$ and $j\in[n]$ such that $k=(i-1)\cdot n+j$.
By the induction hypothesis,
 \begin{equation*}
    \Pr\left[\abs{\Hc_{i,j-1}}\geq\frac{2^{dnN L}}{2^{(k-1)NL}\cdot \left(\frac{4nN}{\epsilon}\right)^{k-1}\cdot 2^{2^m}}\right]\geq (1-\epsilon)^{k-1}.
  \end{equation*}
We denote by $\event$ the event that indeed
  \begin{align*}
    \abs{\Hc_{i,j-1}}\geq\frac{2^{dnN L}}{2^{(k-1)NL}\cdot \left(\frac{4nN}{\epsilon}\right)^{k-1}\cdot 2^{2^m}}.
  \end{align*}
  Thus, by our induction hypothesis,
  \begin{align*}
    \Pr[\event]\geq (1-\epsilon)^{k-1}.
  \end{align*}

In what follows, fix {\em any} $\Hc_{i,j-1}$ such that event $\event$ holds.
Claim~\ref{claim:prob} (with  $U=\Hc_{i,j-1}$ and $M=2^{NL}\cdot 4nN$) implies
that
$$
\Pr\left[\abs{\Hc_{i,j}}\geq \frac{\abs{\Hc_{i,j-1}}}{2^{NL}\cdot{\frac{4nN}{\epsilon}}}\right]\geq 1-\epsilon.
$$
This, in turn, implies that
\begin{align*}
 &\Pr\left[\abs{\Hc_{i,j}}\geq\frac{2^{dnN L}}{2^{kNL}\cdot \left(\frac{4nN}{\epsilon}\right)^{k}\cdot 2^{2^m}}\right]\geq\\
 &\Pr\left[\abs{\Hc_{i,j}}\geq\frac{2^{dnN L}}{2^{kNL}\cdot \left(\frac{4nN}{\epsilon}\right)^{k}\cdot 2^{2^m}}\mid \event\right]\cdot\Pr[\event]\geq\\
 &\Pr\left[\abs{\Hc_{i,j}}\geq\frac{\abs{\Hc_{i,j-1}}}{2^{NL}\cdot{\frac{4nN}{\epsilon}}}\mid \event\right]\cdot\Pr[\event]\geq\\
 &(1-\epsilon)\cdot(1-\epsilon)^{k-1}=\\
 &(1-\epsilon)^k,
\end{align*}
as desired.
\end{proof}%of}

\end{proof}

\subsection{Static Adversaries}\label{sec:static}
We note that Theorem~\ref{thm:main} holds also for static adversary. For completeness, we restate the theorem for static adversaries.

\begin{theorem}
 Fix any $m=m(n)$, $d=d(n)$, $L=L(n)$, and any $n$-party $d$-round public-coin
  protocol $\Pi$ that outputs messages in $\{0,1\}^m$ and in which all parties send
  messages of length $L=L(n)$.  Then, for any constant $\delta>0$, any $t=t(n)<n$, and any $s=s(m)$, if $\Pi$ is $(t,\delta,s)$-statically secure then there exists an $n$-party $d$-round
  $(t,\delta',s)$-statically secure public-coin protocol that simulates~$\Pi$, where all parties send messages of length $\ell =
  m\cdot\log^4(n\cdot d)$, and where $\delta'\leq\delta+\mu^*$ (where $\mu^*$ is the negligible function defined in Equation~\eqref{eqn:mu*}).
\end{theorem}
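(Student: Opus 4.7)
The plan is to mimic the proof of Theorem~\ref{thm:main} essentially verbatim, using the identical construction of $\Pi_H$ (parametrized by matrices $H\in\Hc$) and proving the static analogues of Lemma~\ref{claim:simulation} and Lemma~\ref{lemma:H_secure}. The simulation lemma carries over unchanged, since its statement and proof refer only to the outputs of honest executions and make no reference to any adversary. Thus the only real work is in adapting the proof of Lemma~\ref{lemma:H_secure} to the static setting; conveniently, the static case is strictly easier.

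Assume towards contradiction that for more than half of the matrices $H\in\Hc$ the protocol $\Pi_H$ is \emph{not} $(t,\delta',s)$-statically secure. Each such $H$ admits a static adversary $\Adv^H$ with a fixed corruption set $T(H)\subseteq[n]$ of size at most $t$ and a biasing set $M(H)\subseteq\{0,1\}^m$ of size $s$. Since the number of choices for the pair $(T,M)$ is at most $\binom{n}{\leq t}\cdot\binom{2^m}{s}\leq 2^{n+2^m}$, by averaging there exists a single pair $(T^\ast,M^\ast)$ such that the set
\[
\Hc_1 \;\eqdef\; \{H\in\Hc : T(H)=T^\ast,\; M(H)=M^\ast,\; \valb{M^\ast}{\Pi_H}{\Adv^H}\geq\delta'\}
\]
has size at least $|\Hc|/2^{n+2^m+1}$, which (for the choice of $N$ in Equation~\eqref{eqn:N}) is still large enough for the entropy bounds of Section~\ref{sec:proof_of_lemma} to go through with only a cosmetic change in the negligible function.

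We then define a static adversary $\Adv$ for $\Pi$ as follows. Before the protocol starts, $\Adv$ corrupts the fixed set $T^\ast$. During the execution, $\Adv$ maintains a set $\Hc_{i,j}\subseteq\Hc_1$ of matrices consistent with the transcript so far. Whenever a corrupted party $\party{u}\in T^\ast$ is scheduled to send the $k$-th message, $\Adv$ samples a uniformly random $H^\ast\leftarrow\Hc_{i,j-1}$, computes $\sh^\ast$ as the message that $\Adv^{H^\ast}$ would send on behalf of $\party{u}$ given the induced short transcript $\MAP_{H^\ast}(\trans_{i,j-1})$, and outputs $\lo^\ast=H^\ast(i,u,\sh^\ast)$. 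When an honest party sends a long message $\lo^\ast$, the set $\Hc_{i,j}$ is refined as in \Cref{it:5c} of Figure~\ref{fig:adv_long2short}. This is identical to the adaptive construction except for the single simplification that corruption decisions are never made mid-protocol, so Item~\ref{it:3} and the ``decision to corrupt'' accounting disappear entirely.

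The hybrid argument of Claim~\ref{claim:SD} then applies with essentially no modification: the key per-round statistical-distance bound in Equation~\eqref{eqn:hybrid-corr} is driven by an entropy bound on the row of $H$ indexed by the speaker, which follows from showing inductively that $|\Hc_{i,j-1}|$ shrinks by at most a factor of roughly $2^{NL}\cdot (4nN/\epsilon)$ per round (via Claim~\ref{claim:prob}), giving Equation~\eqref{eqn:sizeH}. None of those estimates used adaptivity of the adversary; they relied only on the entropy introduced by honest random messages and the bound on the initial size $|\Hc_1|$, both of which survive the static reduction. Accordingly, $\valb{M^\ast}{\Pi}{\Adv}\geq\delta'-\mu^\ast/2>\delta$, contradicting the $(t,\delta,s)$-static security of $\Pi$. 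The only obstacle worth flagging is verifying that the slightly smaller $|\Hc_1|\geq|\Hc|/2^{n+2^m+1}$ (versus $|\Hc|/2^{2^m}$ in the adaptive case) is still absorbed by the choice $N=2^{m\cdot\log^4(nd)}$ in Equation~\eqref{eqn:N} — this is a routine check, since the $2^n$ overhead is dominated by $N$ for all meaningful parameter regimes.
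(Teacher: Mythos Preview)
Your proposal is correct and follows essentially the same approach as the paper. The only cosmetic difference is that you fix the pair $(T^\ast,M^\ast)$ by a single averaging step over pairs, whereas the paper first restricts to a common target set $M$ (obtaining $\Hc_1$) and then, among those $H$'s, picks the most popular corruption set $T^\ast$ (obtaining $\Hc'_1\subseteq\Hc_1$ with $|\Hc'_1|\geq|\Hc|/(2^{2^m}\cdot 2^n)$); both routes yield the same size bound up to an irrelevant constant, and the rest of the argument is identical.
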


The proof is almost identical to the proof of Theorem~\ref{thm:main} except that in the static setting, the adversary $\Adv$ needs to decide which~$t$ parties to corrupt before the protocol begins.

Recall that in the proof of Theorem~\ref{thm:main}, the adversary~$\Adv$ simulates one of the adversaries $\Adv^H$.  In the static setting, the adversary $\Adv$ will choose to corrupt the $t$ parties that are consistent with as many $\Adv^H$ as possible.
More specifically, recall that in the proof of Theorem~\ref{thm:main} we defined $\Hc_1$ to be the set of all matrices $H$ such that $\Adv^H$ tries to bias the outcome towards a specific set~$M$. Recall that $\abs{\Hc_1}\geq\frac{\abs{\Hc}}{2^{2^m}}$.

In the static setting, for every $H\in\Hc_1$ we denote by $T^H$ the set of parties that the adversary $\Adv^H$  corrupts. For every set $T\subseteq[n]$ of size~$t$ let $$\alpha(T)=\abs{\{H\in\Hc_1: T^H=T\}}.
$$
We define
$$
T^*=\argmax_T\{\alpha(T)\},
$$
and the adversary $\Adv$ corrupts the set of parties $T^*$.
We define  $\Hc'_1\subseteq \Hc_1$ to consist of all the matrices $H\in\Hc_1$ for which $\Adv^H$ corrupts the set of parties~$T^*$.
Note that
$$
\abs{\Hc'_1}\geq\frac{\abs{\Hc_1}}{2^{n}}\geq\frac{\abs{\Hc}}{2^{2^m}\cdot2^{n}}.
$$

The rest of the proof is similar to that of Theorem~\ref{thm:main}, except that
the analysis is easier in the static setting, since the decision of who to
corrupt has already been made.

 \section{Public-Coin Protocols}\label{sec:public_coin}
 In this section we show how to convert any selection protocol into a
 public-coin protocol.

\begin{theorem}
  Every selection protocol $\Pi$ can be transformed into a protocol $\Pi'$ which
  simulates $\Pi$ and such that the messages sent in $\Pi'$ are uniformly
  random.  Moreover, the protocol $\Pi'$ preserves the security of $\Pi$ and its
  round complexity.
\end{theorem}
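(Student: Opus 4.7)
The plan is to transform $\Pi$ into a public-coin protocol $\Pi'$ by having each party, in each round, simply send as a uniformly random string the randomness it would use in that round of $\Pi$. Formally, for each round $i$ and party $\party{j}$, in $\Pi'$ party $\party{j}$ sends a uniformly random string $\sh_{i,j}$ whose length equals the number of random bits used by $\party{j}$ in round $i$ of $\Pi$. The output of $\Pi'$ is defined to be the output of $\Pi$ when each $\party{j}$ uses $\sh_{i,j}$ as its round-$i$ randomness; equivalently, the effective $\Pi$-message $\lo_{i,j}$ can be computed by anyone from the $\Pi'$-transcript by applying $\party{j}$'s round-$i$ next-message function to the transcript so far and the randomness $\sh_{i,j}$, and the output function of $\Pi$ is then applied to these derived messages. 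By construction $\Pi'$ has the same number of rounds and the same communication pattern as $\Pi$, and its output distribution in an honest execution is identical to that of $\Pi$, so $\Pi'$ simulates $\Pi$ trivially.

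For security preservation I would show that any adversary $\Adv'$ attacking $\Pi'$ can be converted into an adversary $\Adv$ attacking $\Pi$ with identical success, and this reduction works verbatim in both the static and adaptive settings. The adversary $\Adv$ runs $\Adv'$ internally and maintains a simulated $\Pi'$-transcript kept consistent with the real $\Pi$-transcript as follows. Whenever an honest party $\party{j}$ sends a message $\lo_{i,j}$ in $\Pi$, the simulator samples a uniformly random $\sh$ from the pre-image of $\lo_{i,j}$ under $\party{j}$'s round-$i$ next-message function (with respect to the transcript so far) and feeds this $\sh$ to $\Adv'$ as $\party{j}$'s simulated $\Pi'$-message. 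Whenever $\Adv'$ corrupts a party, $\Adv$ performs the corresponding corruption in $\Pi$; whenever $\Adv'$ injects a message $\sh^*$ on behalf of a corrupted party in the simulated $\Pi'$-execution, $\Adv$ sends on behalf of that party the message obtained by applying the relevant next-message function to $\sh^*$. The scheduling of messages within each round is mirrored between the real and simulated transcripts, which handles rushing adversaries.

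The key invariant, which I would prove by induction on the messages of the protocol, is that the joint distribution of the simulated $\Pi'$-transcript presented to $\Adv'$ matches exactly the distribution of the $\Pi'$-transcript in a real run of $\Pi'$ with $\Adv'$. This holds because in a genuine run of $\Pi'$ each honest party's message $\sh_{i,j}$ is uniform, and conditioning on the derived $\Pi$-message being $\lo_{i,j}$ leaves $\sh_{i,j}$ uniform over the corresponding pre-image---precisely the distribution from which the reduction samples. Given this invariant, the outputs of $\Pi$ under $\Adv$ and of $\Pi'$ under $\Adv'$ are identically distributed, so for every set $M$ the bias $\valb{M}{\Pi}{\Adv}$ equals $\valb{M}{\Pi'}{\Adv'}$, and hence $\Pi'$ inherits the security of $\Pi$ while preserving round complexity and communication pattern. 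The main subtlety will be to establish the invariant in the presence of a rushing adaptive adversary, where the ordering of messages within a round and the decision to corrupt at each step both depend on the partial transcript; one must verify that performing the pre-image sampling in the order dictated by $\Adv'$'s scheduling corresponds exactly to a real $\Pi'$ execution, which follows from an inductive argument together with the fact that, conditioned on any prefix of the joint transcript, the next honest $\Pi'$-message remains uniform over the appropriate pre-image.
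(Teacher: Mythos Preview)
Your approach is correct but differs from the paper's in both the construction and the security argument. You have each party send, per round, exactly the fresh random coins it would consume in that round of $\Pi$, and you prove security by a direct simulation: given $\Adv'$ attacking $\Pi'$, build $\Adv$ for $\Pi$ that, upon seeing an honest message $\lo$, samples a uniform preimage $\sh$ and feeds it to $\Adv'$. The paper instead has each party send in every round a uniformly random \emph{permutation} of the entire randomness space $\{0,1\}^\ell$, and at the end selects, for each $(i,j)$, the first string in that permutation that is consistent with the transcript so far; security is argued via a hybrid over the $dn$ messages.

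Your construction is considerably simpler and yields much shorter messages (your $\Pi'$ sends $|\sh_{i,j}|$ bits where the paper's sends on the order of $2^\ell$ bits), and your simulation-based reduction is more direct than the paper's hybrid. The one thing you rely on implicitly that the paper's construction sidesteps is that the randomness of each party can be partitioned into independent per-round chunks; the paper's ``send a full permutation and pick a consistent string'' device works even when $r_j$ is sampled once and reused across rounds. In the full-information, no-private-input setting this partition is a harmless WLOG step (resample $r_j$ conditioned on the public transcript so far), but it would be worth stating explicitly when you write this up. Your preimage-sampling invariant is the right idea and the inductive argument you sketch goes through for rushing adaptive adversaries exactly as you describe.
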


\begin{proofsketch}
  Let $\Pi$ be an $n$-party selection protocol. Let $d = d(n)$ be the number of
  communication rounds and let us assume for simplicity that each party speaks
  at each round. Assume, without loss of generality, that each party samples its
  own randomness ahead of time, when the protocol begins. That is, for every
  $j\in[n]$, party $\party{j}$ has randomness $r_j\in\B^{\ell}$, where we let
  $\ell$ be the maximum number of random bits used by all parties during the
  protocol. At each round~$i$, party $\party{j}$ evaluates a function $f_{i,j}$
  which depends on the transcript of the protocol so far, which we denote by
  $\trans_{i-1}$ (\ie $\trans_{i-1}$ are the messages sent by all parties in
  rounds $1,\dots,i-1$), and on its own randomness $r_j$. Namely, the message
  sent at round $i\in[d]$ by party $\party{j}$ is
  \begin{align*}
    m_{i,j} = f_{i,j}(\trans_{i-1}, r_j).
  \end{align*}

  Before we define the protocol $\Pi'$, we introduce some notation. We say that
  a random string $r$ is \emph{good} \wrt transcript $\trans_i$ and party
  $\party{j}$ if when it is used as the randomness of that party, it generates
  the same exact transcript.

  Next, we define the protocol $\Pi'$. In round $i\in[d]$, party $\party{j}$
  sends a uniformly random string $u_{i,j}$ of length $2^{\ell}\cdot \ell$. Specifically, each
  party sends a uniformly random permutation of all possible $\ell$-bit
  strings. At the end, after the $\ith{d}$ round ends, we interpret each
  $u_{i,j}$ as a collection of many possible random strings for party $\party{j}$,
  choose one (say the first), denoted by $r_{i,j}$, which is \emph{good} \wrt
  the transcript so far and think of the $\ith{(i,j)}$ message as
  $f_{i,j}(\trans_{i-1}, r_{i,j})$.

  First, we observe that the round complexity of $\Pi'$ is the same as that of
  $\Pi$. Next, we claim that in an honest execution (\ie in the absence of an
  adversary), the distribution of the output of the protocol $\Pi$ is identical
  to that of $\Pi'$ (namely, $\Pi'$ simulates $\Pi$). We first note that
  conditioned on the fact that a \emph{good} randomness was found for all
  $d\cdot n$ messages, the above distributions are the same. This is true since
  in $\Pi'$ each party sends all possible $\ell$ bit strings in a {\em uniformly
    random} order. Second, we note that, since each party sends \emph{all}
  possible $\ell$-bit strings in each round, there \emph{always} exists
  \emph{good} randomness.

  Next, we argue that the protocol $\Pi'$ is as secure as $\Pi$.
  This follows by a simple hybrid argument. We define a sequence of protocols
  $\Pi^{(i)}$ for $i\in\{0,\dots,dn\}$ in which until (and including) the
  $\ith{i}$ message, the parties act according to $\Pi$ and in the rest of the
  protocol they act according to $\Pi'$. Notice that $\Pi' \equiv \Pi^{(0)}$ and
  $\Pi \equiv \Pi^{(dn)}$. We argue that for every $i\in[dn]$, the ``advantage''
  of any $\Adv^{(i)}$ in $\Pi^{(i)}$ over any $\Adv^{(i-1)}$ in $\Pi^{(i-1)}$ is
  zero.

  To this end, observe that the first $i-1$ messages are distributed exactly the
  same.
  In the next message (\ie the $\ith{i}$ one) the protocols deviate. Assume
  party $\party{j}$ speaks in both. While in $\Pi^{(i)}$ the message sent is
  some function of the transcript so far and the initial randomness $\party{j}$
  has, in $\Pi^{(i-1)}$ it is a random permutation of all possible random
  strings. We first note that if party $\party{j}$ is corrupted, then both the
  adversary $\Adv^{(i)}$ and $\Adv^{(i-1)}$ can force any message in the name of
  $\party{j}$ and thus they have the same power in both protocols (recall that
  after the $\ith{i}$ message, the protocols are identical). Hence, assume that
  $\party{j}$ is not corrupted. In this case, the adversary $\Adv^{(i)}$ sees a
  message which is a function of the transcript up to that point and the
  (private) randomness of that party, whereas $\Adv^{(i-1)}$ sees a message
  which is a random permutation of all possible random strings. The theorem now
  follows by observing that one adversary can simulate the view of the other,
  and recalling that the rest of the messages in both protocols are
  identically distributed.
\end{proofsketch}

\subsubsection*{Acknowledgments}
We thank Nancy Lynch, Merav Parter and David Peleg for helpful remarks and
pointers. The second author thanks his advisor Moni Naor for his continuous
support.

\addcontentsline{toc}{section}{References}
\bibliographystyle{alpha}
\bibliography{RandDist}

\end{document}